\documentclass[11pt]{article}

\usepackage{epsfig,cancel,amsmath,amsthm,amssymb}
\usepackage{color}
\usepackage{hyperref}
\usepackage{overpic}
\usepackage{comment}
\newcommand{\ud}{\,\mathrm{d}}

\newtheorem{theorem}{Theorem}[section]
\newtheorem{lemma}[theorem]{Lemma}
\newtheorem{proposition}[theorem]{Proposition}

\newtheorem{corollary}[theorem]{Corollary}

\theoremstyle{definition}
\newtheorem{remark}{Remark}[section]

\def\ds{\displaystyle}

\DeclareMathOperator*{\Res}{Res}
\DeclareMathOperator*{\diag}{diag}

\DeclareMathOperator{\re}{Re}
\DeclareMathOperator{\im}{Im}
\renewcommand{\Re}{\re}
\renewcommand{\Im}{\im}

\numberwithin{equation}{section}

\title{Singular values of products of Ginibre random matrices, multiple orthogonal polynomials
and hard edge scaling limits}
\author{Arno B.J. Kuijlaars\footnotemark[1] ~and Lun Zhang\footnotemark[2]}
\date{\today}

\begin{document}

\maketitle
\renewcommand{\thefootnote}{\fnsymbol{footnote}}
\footnotetext[1]{Department of Mathematics, KU Leuven,
Celestijnenlaan 200B, B-3001 Leuven, Belgium. E-mail:
arno.kuijlaars\symbol{'100}wis.kuleuven.be}
\footnotetext[2] {School of Mathematical Sciences and Shanghai Key Laboratory for 
Contemporary Applied Mathematics, Fudan University, Shanghai 200433, People's Republic of China. E-mail:
lunzhang\symbol{'100}fudan.edu.cn }

\begin{abstract}
Akemann, Ipsen and Kieburg recently showed that the squared
singular values of products of $M$ rectangular random matrices with independent
complex Gaussian entries are distributed according to a determinantal
point process with a correlation kernel that can be expressed in
terms of Meijer G-functions. We show that this point process can be interpreted as a
multiple orthogonal polynomial ensemble. We give integral representations
for the relevant multiple orthogonal polynomials and a new double
contour integral for the correlation kernel, which allows us to find
its scaling limits at the origin (hard edge).
The limiting kernels generalize the classical Bessel kernels. For $M=2$ they
coincide with the scaling limits found by Bertola, Gekhtman, and Szmigielski
in the Cauchy-Laguerre two-matrix model, which indicates that these kernels
represent a new universality class in random matrix theory.
\end{abstract}

\section{Introduction}
\subsection{Products of Ginibre random matrices}

Random matrix theory is a broad field with many applications 
in mathematics, physics, and beyond, as is witnessed by the
survey volume \cite{ABF} and the recent monographs \cite{AGZ,Dei,ForBook,Tao}. 
Of particular importance for the development of the theory 
is the connection with determinantal point processes. Whenever 
the eigenvalues of a random matrix ensemble are a
determinantal point process, one has explicit expressions for the
eigenvalue distributions in terms of the correlation kernel. Tools from integrable systems may then be 
used to further analyze the correlation kernel in the large $n$ limit, in order to 
establish, for example, universality of local eigenvalue correlations.
It is a  recent discovery that products of random matrices can
fall in the framework of determinantal point processes.

The topic of products of  random matrices can be traced back to 
the work of Furstenberg and Kesten \cite{FK}, where the interest lies 
in the asymptotic behavior as the number of factors in the product tends to infinity.
This work has been highly influential with important applications 
in Schr\"{o}dinger operator theory \cite{BL} and in statistical physics 
relating to disordered and chaotic dynamical systems \cite{CPV}.

A more recent development is the study of eigenvalue and singular value
distributions for the products of random matrices with a fixed number of factors, but
allowing the size of the matrices to tend to infinity. With tools from free probability and diagrammatic expansions,  
one may find the limiting global eigenvalue distributions as in \cite{BBCC, BJW, BJLNS, PZ}.
It turns out that, as in the theory of a single random matrix, the various limits exhibit 
a rich and interesting  mathematical structure, which also show a large degree of universality, see e.g.\ \cite{GT, OS}.  
Apart from physical applications, the study is also motivated by other fields like MIMO (multiple-input and multiple-output)
networks in telecommunication \cite{TV}.

Akemann and Burda \cite{AB} proved that the eigenvalues of products of complex Ginibre 
matrices are determinantal in the complex plane, see \cite{Ip} for an extension
to quaternionic Ginibre matrices. 
A similar determinantal structure holds for the eigenvalues of products of truncated unitary matrices
\cite{ABKN}. The determinantal structure opens up the way to a more detailed analysis
at the finite $n$ level \cite{ABKN, AS}.
Very recently, Akemann, Kieburg, and Wei \cite{AKW} found that the squared singular values of products of
complex Ginibre matrices are a determinantal point process on the positive real line. This was
further extended to the case of products of rectangular Ginibre matrices by Akemann, Ipsen and Kieburg \cite{AIK}.  
The correlation kernels in \cite{AB,ABKN,AIK,AKW,Ip} are all expressed in terms of Meijer G-functions.

In this paper we follow \cite{AIK}. We take $M \geq 1$ and
let $X_1, X_2,\ldots,X_M$ be complex random matrices whose entries
are independent with a complex Gaussian distribution, also known as
Ginibre random matrices. We assume $X_j$ has size $N_{j} \times
N_{j-1}$ and form the product
\begin{equation} \label{Ym}
Y_M = X_M X_{M-1} \cdots X_1.
\end{equation}
Our interest lies in the squared singular values of $Y_M$, that is
the eigenvalues of $Y_M^*Y_M$, where the superscript $^\ast $ stands
for conjugate transpose. We assume $N_0 = \min \{N_0, \ldots, N_M \}$, and write
\begin{equation} \label{nuj}
\nu_j = N_j - N_0, \qquad j=0, \ldots, M, \qquad n = N_0.
\end{equation}
Thus $\nu_0=0$ and $Y_M^* Y_M$ is a square matrix of size $n$.

The case for the products of square matrices (i.e., $\nu_j = 0$ for
every $j$) was considered by Akemann, Kieburg and Wei \cite{AKW},
who showed that the squared singular values are distributed
according to a determinantal point process with a correlation kernel
that can be expressed in terms of Meijer G-functions. This was
extended by Akemann, Ipsen and Kieburg \cite{AIK} to the general
rectangular case. The determinantal point process is a biorthogonal
ensemble \cite{Bor} with joint probability density function (see \cite[formula
(18)]{AIK})
\begin{equation} \label{jpdf}
    P(x_1, \ldots, x_n) =  \frac{1}{Z_n}  \prod_{j < k} (x_k-x_j)\,
        \det \left[ w_{k-1}(x_j) \right]_{j,k=1, \ldots, n},
    \end{equation}
where $x_j>0$, $j=1,\ldots,n$, are the squared singular values of
$Y_M$,
\begin{equation} \label{wk}
    w_k(x) = \mathop{{G^{{M,0}}_{{0,M}}}\/}\nolimits\!\left({- \atop \nu_M, \nu_{M-1},  \ldots, \nu_2, \nu_1 +
    k} \Big{|} x\right),
    \end{equation}
and normalization constant (see \cite[formula (21)]{AIK})
\[ Z_n = n!\prod_{i=1}^{n}\prod_{j=0}^M \Gamma(i+\nu_j).  \]
The function $w_k$ is a Meijer G-function (see e.g.\ \cite{BeSz,Luke} and
the Appendix for a brief introduction) which can be written as a Mellin-Barnes integral
\begin{equation} \label{wkasMB}
    w_k(x) =  \frac{1}{2\pi i} \int_{c-i\infty}^{c+i\infty} \Gamma(s+\nu_1 + k) \prod_{j=2}^{M} \Gamma(s+\nu_j)  x^{-s} \ud s,
    \qquad k=0, 1, \ldots,  \end{equation}
with $c > 0$. By the inversion formula for the Mellin transform we have
\begin{equation} \label{moments}
    \int_0^{\infty} w_k(x) x^{s-1} \ud x =  \Gamma(s+\nu_1 + k) \prod_{j=2}^{M} \Gamma(s+\nu_j), \qquad s > 0,
    \end{equation}
which in particular shows that the moments of $w_k$ are given as
products of Gamma functions.

By \eqref{wkasMB} and the functional equation of the Gamma function
$\Gamma(z+1)=z\Gamma(z)$, we have
\[ w_k(x) =
\frac{1}{2\pi i} \int_{c-i\infty}^{c+i\infty} (s+ \nu_1)_k \prod_{j=1}^{M} \Gamma(s+\nu_j)  x^{-s} \ud s, \]
where  the Pochhammer symbol
\[ (s+\nu_1)_k =\frac{\Gamma(s+\nu_1+k)}{\Gamma(s+\nu_1)} =(s+\nu_1)(s+\nu_1+1) \cdots (s+\nu_1 + k-1) \]
is a polynomial of degree $k$ in the variable $s$.
Then by taking linear combinations of the weights we could alternatively take
\begin{align}  \label{wktilde}
    \widetilde{w}_k(x) & = \frac{1}{2\pi i} \int_{c-i\infty}^{c+i\infty} s^k \prod_{j=1}^{M} \Gamma(s+\nu_j) x^{-s} \ud s,
    \end{align}
in the definition of \eqref{jpdf}. This representation shows that
\eqref{jpdf} is fully symmetric in all parameters $\nu_1, \ldots,
\nu_M$. Note that
\[ \widetilde{w}_k(x) = \left( - x \frac{\ud}{\ud x} \right)^k w_0(x), \]
which can be easily obtained from \eqref{wkasMB}.

\subsection{Biorthogonal functions and the correlation kernel}
From general properties of biorthogonal ensembles \cite{Bor}, it is known that \eqref{jpdf}
is a determinantal point process with correlation kernel
\begin{equation} \label{Kn}
    K_n(x,y) = \sum_{j=0}^{n-1} \sum_{k=0}^{n-1} x^j (M_n^{-1})_{k,j}
    w_k(y),
    \end{equation}
where $M_n$ is the matrix of moments of size $n \times n$,
\begin{equation} \label{Mmoment}
    M_n = \begin{pmatrix} \ds \int_0^{\infty} x^j w_k(x) \ud x \end{pmatrix}_{j,k=0, \ldots, n-1}.
    \end{equation}
In addition  we have
\begin{equation} \label{def:Kn}
    K_n(x,y) = \sum_{k=0}^{n-1} P_k(x) Q_k(y),
    \end{equation}
where for each $k = 0, 1, \ldots$, $P_k$ is a monic polynomial of degree $k$ and $Q_k$ belongs to
the linear span of $w_0, \ldots, w_k$ in such a way that
\begin{equation} \label{PkQkbio}
    \int_0^{\infty} P_j(x) Q_k(x) \ud x = \delta_{j,k}.
    \end{equation}
Thus the $P_k$ and $Q_k$ are biorthogonal functions that we consider
for every non-negative integer $k$, not just for $k \leq n-1$.

Akemann et al.\ \cite{AIK, AKW} studied an extension of
\eqref{jpdf} to a two-matrix model and obtained in this framework
that for certain polynomials $\widetilde{Q}_k$,
\begin{equation} \label{2MMduality}
    \int_0^{\infty} \int_0^{\infty}  P_j(x) \widetilde{Q}_k(y) w^M_{\nu} (x,y) \ud x \ud y = h_j^{M}\delta_{j,k},
    \end{equation}
with
\[ w^M_{\nu}(x,y) = y^{\nu_1-1} e^{-y}
    \mathop{{G^{{M-1,0}}_{{0,M-1}}}\/}\nolimits\!\left({- \atop \nu_M, \nu_{M-1},  \ldots, \nu_2} \Big{|} \frac{x}{y}\right) \]
and
\[h_j^M=\prod_{m=0}^M(j+\nu_m)!;\]
see \cite[formulas (25), (27) and (37)]{AIK}. We emphasize that
$\widetilde{Q}_k \neq Q_k$, since indeed $Q_k$ is not a polynomial
and $\widetilde{Q}_k$ is a multiple of the Laguerre polynomial
$L^{(\nu_1)}_k$; see \cite[formula (42)]{AIK}. The
biorthogonality \eqref{2MMduality} is related to \eqref{PkQkbio},
since
\[ Q_k(x) = \frac{1}{h_k^M} \int_0^{\infty} \widetilde{Q}_k(y) w^M_{\nu}(x,y) \ud y, \]
but we will not use this fact.

The starting point of this paper is the biorthogonality
\eqref{PkQkbio} and we first show that the polynomials $P_k$ can be
characterized as multiple orthogonal polynomials with respect to the
first $M$ weight functions $w_0, \ldots, w_{M-1}$. Hence, the point
process \eqref{jpdf} is a multiple orthogonal polynomial (MOP)
ensemble in the sense of \cite{Kui1,Kui2}. This further implies a
representation of the correlation kernel $K_n$ \eqref{def:Kn} in
terms of the associated Riemann-Hilbert problem, which is helpful
for future asymptotic analysis.


In \cite{AIK} it is shown that the biorthogonal functions $P_k$
and $Q_k$ have integral representations as Meijer G-functions. We
rederive these results in Section \ref{sec:integral representation}
using only the biorthogonality \eqref{PkQkbio}. The recurrence
relations of the biorthogonal functions are explicitly given in
Section \ref{sec:recurrence rel}.
We turn to the study of the function $K_n$ in Section \ref{sec:study
of Kn}. We derive a double contour integral representation of $K_n$, which allows
us to find its scaling limit at the origin (hard edge). The limiting
kernels generalize the classical Bessel kernel, and if $M=2$, it
coincides with the limiting kernels in the Cauchy-Laguerre
two-matrix model recently studied by Bertola, Gekhtman and
Szmigielski in \cite{BeGeSz}. Universality suggests that the new
limiting kernels should apply to more general situations for the
products of independent complex random matrices, thus, representing a
new universality class. Finally, we present the integrable form of
the limiting kernels in the sense of Its-Izergin-Korepin-Slavnov
\cite{IIKS}. For convenience of the reader, we include a short
introduction to the Meijer G-function in the Appendix.

\begin{remark}
It is possible to consider the probability density function \eqref{jpdf} for
general parameters $\nu_1, \ldots, \nu_M > -1$. The condition $\nu_j > -1$ is needed
in order to guarantee the existence of the moments in \eqref{Mmoment}. All the
constructions in this paper go through in that more general case.

However, we do not have a proof that \eqref{jpdf} is a probability density function
in the case of non-integer parameters, in particular we do not know that
\eqref{jpdf} is non-negative for all $x_1, \ldots, x_n$, although
we strongly suspect that it will be the case.
\end{remark}

\section{Multiple orthogonal polynomial ensemble}\label{sec:MOP ensemble}


\subsection{Multiple orthogonality}
Our first result is that the point process \eqref{jpdf} is a MOP
ensemble \cite{Kui1,Kui2} with $M$ weight functions $w_0, \ldots, w_{M-1}$, where the
$w_k$ are defined in \eqref{wk}. This follows from
the following lemma.

\begin{lemma}\label{lem:mop}
The linear span of the functions $w_0, w_1, \ldots, w_{n-1}$ is
equal to the linear span of the functions
\begin{equation} \label{MOPbasis}
    x \mapsto x^j w_k(x), \qquad k =0, \ldots, M-1, \quad k +jM < n.
    \end{equation}
\end{lemma}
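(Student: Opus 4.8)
The plan is to establish the equality of spans by a dimension count together with explicit containments in both directions. First I would observe that the proposed family $\{x^j w_k(x) : 0\le k\le M-1,\ k+jM < n\}$ has exactly $n$ elements: for each residue class $k \bmod M$ with $0\le k\le M-1$, the admissible values of $j$ are $j=0,1,\ldots$ with $jM < n-k$, and summing these counts over $k=0,\ldots,M-1$ gives precisely $n$ (this is just the statement that every integer in $\{0,1,\ldots,n-1\}$ is uniquely written as $k+jM$). Since $\dim \mathrm{span}\{w_0,\ldots,w_{n-1}\} \le n$ as well, it suffices to prove one inclusion, say that each $w_\ell$ for $0\le \ell\le n-1$ lies in the span of the family \eqref{MOPbasis}, and then note that both spans have dimension at most $n$ while one contains the other — but to conclude equality I also need that the $n$ functions in \eqref{MOPbasis} are linearly independent, or equivalently that $w_0,\ldots,w_{n-1}$ are linearly independent; the latter follows because the moment matrix $M_n$ in \eqref{Mmoment} is invertible (the ensemble \eqref{jpdf} is a genuine biorthogonal ensemble with nonzero partition function $Z_n$).

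The core of the argument is the inclusion: $x^j w_k(x)$, for $k+jM<n$, lies in $\mathrm{span}\{w_0,\ldots,w_{n-1}\}$. The natural tool is the Mellin-Barnes representation \eqref{wkasMB}. Multiplying by $x^j$ shifts the contour variable, so
\[
 x^j w_k(x) = \frac{1}{2\pi i}\int_{c-i\infty}^{c+i\infty} \Gamma(s+\nu_1+k)\prod_{m=2}^M \Gamma(s+\nu_m)\, x^{-(s-j)}\,\ud s
 = \frac{1}{2\pi i}\int \Gamma(s+j+\nu_1+k)\prod_{m=2}^M \Gamma(s+j+\nu_m)\, x^{-s}\,\ud s
\]
after the substitution $s\mapsto s-j$ (valid since the integrand is analytic in the relevant strip and decays). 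Now I would use the functional equation $\Gamma(z+1)=z\Gamma(z)$ repeatedly to peel off the shift in the first factor: $\Gamma(s+j+\nu_1+k) = (s+\nu_1+k)_j\,\Gamma(s+\nu_1+k)$, a polynomial of degree $j$ in $s$ times $\Gamma(s+\nu_1+k)$, while for $m\ge 2$ one writes $\Gamma(s+j+\nu_m)=(s+\nu_m)_j\,\Gamma(s+\nu_m)$. Collecting, the integrand becomes $p(s)\,\Gamma(s+\nu_1+k)\prod_{m=2}^M\Gamma(s+\nu_m)\,x^{-s}$ where $p$ is a polynomial of degree $jM$ in $s$. Since $(s+\nu_1)_\ell$ for $\ell=0,1,\ldots$ form a basis for polynomials in $s$, and $(s+\nu_1)_\ell\,\Gamma(s+\nu_1)=\Gamma(s+\nu_1+\ell)$, I can rewrite $p(s)\Gamma(s+\nu_1+k)$ — after first noting $\Gamma(s+\nu_1+k)=(s+\nu_1)_k\Gamma(s+\nu_1)$, so $p(s)\Gamma(s+\nu_1+k)=\big(p(s)(s+\nu_1)_k\big)\Gamma(s+\nu_1)$ with $p(s)(s+\nu_1)_k$ of degree $jM+k<n$ — as a linear combination of $\Gamma(s+\nu_1+\ell)$, $0\le \ell\le jM+k \le n-1$. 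Feeding each such term back through \eqref{wkasMB} identifies it as a multiple of $w_\ell$, so $x^j w_k \in \mathrm{span}\{w_0,\ldots,w_{n-1}\}$ as required. (One must check the contour manipulations are legitimate — the Gamma factors ensure exponential decay along vertical lines, and no poles are crossed since all shifts are by nonnegative amounts for $c>0$ large enough; this is routine.)

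The main obstacle is bookkeeping the degree constraint exactly: I need the polynomial multiplying $\Gamma(s+\nu_1)$ to have degree at most $n-1$, and this is where the hypothesis $k+jM<n$ is used sharply — the degree is $jM+k$, so the constraint $jM+k\le n-1$, i.e.\ $k+jM<n$, is precisely what makes $w_{jM+k}$ (and all lower $w_\ell$) available. I would also need to double-check that the reverse inclusion argument (or the dimension/independence argument) is airtight: rather than proving both inclusions directly, the cleanest route is (i) the family \eqref{MOPbasis} has $n$ members, (ii) each member lies in the $n$-dimensional-at-most span of $w_0,\ldots,w_{n-1}$, (iii) $w_0,\ldots,w_{n-1}$ are linearly independent (from invertibility of $M_n$), hence that span is exactly $n$-dimensional and the $n$ members of \eqref{MOPbasis} must span it. A minor subtlety to flag is the case of non-integer $\nu_j>-1$ mentioned in the Remark: the Pochhammer/Gamma manipulations above are insensitive to integrality, so the proof goes through verbatim there too.
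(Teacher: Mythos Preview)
Your proposal is correct and follows essentially the same approach as the paper: both use the Mellin--Barnes representation and the functional equation of $\Gamma$ to show that $x^j w_k$ corresponds to a polynomial $q(s)$ of degree $k+jM$ in the representation \eqref{MOPspace}, and then conclude equality of spans by a dimension count. The only minor difference is in the independence step---the paper observes directly that the resulting polynomials $q(s)$ have pairwise distinct degrees (and tacitly uses injectivity of the inverse Mellin transform), whereas you invoke invertibility of the moment matrix $M_n$; both arguments are valid and close in spirit.
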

\begin{proof}
The linear span of $w_0, w_1, \ldots, w_{n-1}$ consists of
all functions that can be written as
\begin{equation} \label{MOPspace}
    x \mapsto \frac{1}{2\pi i} \int_{c-i\infty}^{c+i\infty} q(s) \prod_{j=1}^M \Gamma(s+\nu_j) x^{-s} \ud s,
    \qquad \deg q(s) \leq n-1.
    \end{equation}
We have by \eqref{wkasMB} and a change of variables $s \mapsto s+j$,
\begin{align*}
x^j  w_k(x) & = \frac{1}{2\pi i} \int_{c-i\infty}^{c+i\infty} (s+\nu_1)_k  \prod_{l=1}^M \Gamma(s+\nu_l) x^{j-s} \ud s \\
& = \frac{1}{2\pi i} \int_{c-i\infty}^{c+i\infty} (s+\nu_1+j)_k
\prod_{l=1}^{M} (s+\nu_l)_j \prod_{l=1}^M \Gamma(s+\nu_l) x^{-s} \ud
s.
\end{align*}
This is of the form \eqref{MOPspace} with polynomial
\[ q(s) = (s+\nu_1+j)_k \prod_{l=1}^{M} (s+\nu_l)_j \]
of degree $k+jM$. Thus the functions \eqref{MOPbasis} belong to the
linear span of $w_0, \ldots, w_{n-1}$. It is readily seen that these
are independent since they correspond to polynomials $q(s)$ that
have different degrees.
\end{proof}

The polynomials $P_k$ are therefore MOPs of type II with respect to
the weights $w_0, \ldots, w_{M-1}$ and diagonal multiple indices,
i.e.,
\begin{equation*}
\int_0^{\infty}P_n(x)x^j w_k(x) \ud x=0, \qquad
    j=0,\ldots, \lceil \tfrac{n-k}{M}\rceil-1, \quad k=0,\ldots,M-1,
\end{equation*}
where $\lceil x \rceil$ denotes the smallest integer $\geq x$; see
\cite{IsBook,VAGK}.


\subsection{Riemann-Hilbert problem}
As a consequence of Lemma \ref{lem:mop}, the polynomial $P_n$ is
characterized by the following Riemann-Hilbert problem. We look for a
$(M+1)\times (M+1)$ matrix-valued function $Y : \mathbb C \setminus
[0, \infty) \to \mathbb C^{(M+1) \times (M+1)}$ that is analytic
with jump condition
\begin{equation} \label{Yjump}
    Y_+(x) = Y_-(x) \begin{pmatrix}
1 & w_0(x) & \cdots & w_{M-1}(x) \\
0 & 1 &  \cdots & 0 \\
\vdots & \vdots & \ddots & \vdots \\
0 & 0 & \cdots & 1 \end{pmatrix}, \qquad x\in  (0, \infty),
\end{equation} where
$Y_+$ ($Y_-$) denotes the limiting value from the upper (lower)
half-plane. As $z\to \infty$, we require
\begin{equation} \label{Yasymp}
    Y(z) = (I + O(1/z)) \diag \begin{pmatrix} z^n & z^{-n_0} & \cdots & z^{-n_{M-1}} \end{pmatrix},
    \end{equation}
where $n_k =  \lceil \frac{n-k}{M} \rceil$. Combined with appropriate local
conditions near the origin that depend on the parameters
$\nu_1,\nu_2,\ldots,\nu_M$, the Riemann-Hilbert problem \eqref{Yjump}--\eqref{Yasymp}
 has a unique solution and the $(1,1)$ entry of $Y$ is $P_{n}$; see \cite{VAGK}.
Also, one has
\begin{multline} \label{KnandRHP}
 K_n(x,y) =\\
    \frac{1}{2\pi i(x-y)}
    \begin{pmatrix} 0 & w_0(y) & \cdots & w_{M-1}(y) \end{pmatrix}
    Y_+^{-1}(y) Y_+(x) \begin{pmatrix} 1 \\ 0 \\ \vdots \\ 0
    \end{pmatrix},
\end{multline}
which is a manifestation of the Christoffel-Darboux formula for
multiple orthogonal polynomials; see \cite{DaKu}. The representation
\eqref{KnandRHP} is potentially useful for asymptotic analysis
although we will not pursue this here.

\subsection{Special case $M=2$}

We now take a look at the case $M=2$. If $M=2$, then
\[ w_0(x) = \frac{1}{2\pi i} \int_{c-i\infty}^{c+i \infty}
    \Gamma(s+\nu_1) \Gamma(s+\nu_2) x^{-s} \ud s. \]
This can be expressed in terms of the modified Bessel function of
second kind (a.k.a.\ the Macdonald function). The formula 10.32.13
of \cite{DLMF} says that
\begin{equation*}\label{eq:Knu integral}
2 K_{\nu}(2 \sqrt{x}) =
            \frac{x^{\nu/2}}{2\pi i} \int_{c-i\infty}^{c+i \infty}
                \Gamma(s) \Gamma(s-\nu) x^{-s} \ud s, \qquad
                c>\max(\nu, 0),
\end{equation*}
which after a change of variables $s \mapsto s + \nu + \alpha $ leads to
\[ 2 K_{\nu}(2 \sqrt{x}) =
            \frac{x^{-\nu/2-\alpha}}{2\pi i} \int_{c-i\infty}^{c+i \infty}
                \Gamma(s+\nu+\alpha) \Gamma(s+\alpha) x^{-s} \ud s, \quad c>\max(-\nu-\alpha,-\alpha). \]
We take $\alpha = \nu_2$, $\nu=\nu_1 -\nu_2$,
to find that
\begin{equation} \label{Knu1}
    w_0(x) = 2 x^{(\nu_1 + \nu_2)/2} K_{\nu_1 - \nu_2}(2 \sqrt{x}).
    \end{equation}
It will be convenient to assume that $\nu_1 \geq \nu_2$, which we
can do without loss of generality.

Similarly,
\begin{align} \nonumber
    w_1(x) & = \frac{1}{2\pi i} \int_{c-i\infty}^{c+i \infty}
    \Gamma(s+\nu_1+1) \Gamma(s+\nu_2) x^{-s} \ud s \\
    & = 2 x^{(\nu_1 + \nu_2+1)/2}
    K_{\nu_1 + 1 - \nu_2}(2 \sqrt{x}). \label{Knu2}
        \end{align}
Thus if $\rho_{\nu}(x) = 2 x^{\nu/2} K_{\nu}(2\sqrt{x})$, we have
\[ w_0(x) = x^{\alpha} \rho_{\nu}(x), \quad
w_1(x) = x^{\alpha} \rho_{\nu+1}(x). \]
Multiple orthogonal
polynomials associated with the two weights \eqref{Knu1}--\eqref{Knu2} were considered by Van
Assche and Yakubovich \cite{VAYa} for which they obtained four term
recurrence relations; see also \cite{CoCoVA} and \cite{ZhRo} for
asymptotic results for these polynomials. In the random matrix
context (i.e., the case where $\nu_j = N_j -N_0$ are integers), we
have
\[ \nu = N_1 - N_2, \qquad \alpha = N_2 - N_0. \]
For the special case $\nu=\alpha=0$ (i.e., the products of two
square matrices), this relation was first observed in \cite{Zhang}.

For general $M$, there is an $M+2$ term recurrence relation (this
follows from general theory of MOP, cf. \cite[Section~23.1.4]{IsBook}) and we
will determine the recurrence coefficients
explicitly in Section \ref{sec:recurrence rel}.


\section{Integral representations}\label{sec:integral representation}

Integral representations for the biorthogonal polynomials $P_k$ and
their dual functions $Q_k$ are given in \cite{AIK} where they were
derived from a two matrix model. We rederive these results directly from
the biorthogonality \eqref{PkQkbio}.

\subsection{Integral representation for $Q_k$}
Recall the biorthogonality \eqref{PkQkbio}. The biorthogonal function $Q_k$ has
the form
\[ Q_k(x) = \frac{1}{2\pi i} \int_{c-i\infty}^{c+i\infty} q_k(s)
\prod_{j=1}^M \Gamma(s+\nu_j) x^{-s} \ud s, \] where $q_k$ is a
polynomial of degree $k$. The biorthogonality \eqref{PkQkbio} then
says that
\[ \frac{1}{2\pi i} \int_0^{\infty}  \int_{c-i \infty}^{c+i\infty} P_l(x) q_k(s)
    \prod_{j=1}^{M} \Gamma(s+\nu_j)  x^{-s} \ud s \ud x = \delta_{l,k}. \]

It turns out that we can write down $q_k$ explicitly as stated in
the following proposition.
\begin{proposition}
We have
\begin{align} \label{QkRodr}
    Q_k(x) = \frac{(-1)^k}{\prod_{j=0}^M \Gamma(k+1+\nu_j)} \left( \frac{\ud}{\ud x} \right)^k \left( x^k w_0(x)
    \right),
    \end{align}
    and
\begin{align} \label{qk}
    q_k(s) = \frac{(s-k)_k}{\prod_{j=0}^M \Gamma(k+1+\nu_j)}.
    \end{align}
\end{proposition}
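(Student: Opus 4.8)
The plan is to verify the two claimed formulas simultaneously by checking that the function $Q_k$ defined by the Rodrigues-type expression \eqref{QkRodr} satisfies the biorthogonality \eqref{PkQkbio} together with the degree/membership constraints that characterize $Q_k$ uniquely. Since $Q_k$ is uniquely determined (within the span of $w_0,\dots,w_k$) by the requirement $\int_0^\infty P_j Q_k = \delta_{j,k}$ for all $j$, it suffices to produce one candidate with the right properties.

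First I would compute the Mellin transform of the right-hand side of \eqref{QkRodr}. Writing $x^k w_0(x)$ via \eqref{wkasMB} (with $k=0$, shifted), applying $(\ud/\ud x)^k$ under the Mellin–Barnes integral brings down a factor that is a Pochhammer symbol in $s$; combined with the shift $x^k$ this should produce exactly $(s-k)_k \prod_{j=1}^M \Gamma(s+\nu_j)\, x^{-s}$ up to the sign $(-1)^k$ and the constant. This identifies $q_k(s)$ as in \eqref{qk}, and in particular shows $q_k$ is a polynomial of degree $k$ in $s$, so $Q_k$ indeed lies in the span of $w_0,\dots,w_k$. The mechanics here are routine: it is the elementary identity $\bigl(\tfrac{\ud}{\ud x}\bigr)^k x^{-s} = (-1)^k (s)(s+1)\cdots(s+k-1)\, x^{-s-k}$ applied inside the contour integral, plus a change of variable $s\mapsto s-k$ to absorb the extra $x^{-k}$, with care that the contour stays to the right of all poles. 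A quick cross-check: the moment formula \eqref{moments} gives $\int_0^\infty Q_k(x)\, x^{s-1}\ud x = q_k(s)\prod_{j=2}^M\Gamma(s+\nu_j)\,\Gamma(s+\nu_1)$, which for $s=1,\dots,k$ vanishes because $q_k(1)=\dots=q_k(k)=0$ (those are precisely the roots $k,k-1,\dots,1$ of $(s-k)_k$); this is consistent with $\int_0^\infty x^{j}Q_k(x)\ud x = 0$ for $j=0,\dots,k-1$, i.e.\ biorthogonality against the lower-degree monic polynomials $P_j$.

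To finish I would compute the pairing $\int_0^\infty P_j(x) Q_k(x)\, \ud x$ for $j\le k$. Expanding $P_j(x) = \sum_{i=0}^j c_i x^i$ (monic, so $c_j=1$) and using the Mellin inversion/moment identity \eqref{moments}, the integral becomes $\sum_{i=0}^j c_i\, q_k(i+1) \prod_{l=2}^M \Gamma(i+1+\nu_l)\,\Gamma(i+1+\nu_1)$. The factor $q_k(i+1) = (i+1-k)_k / \prod_{j=0}^M\Gamma(k+1+\nu_j)$ vanishes for $i=0,1,\dots,k-1$ since then $i+1 \in \{1,\dots,k\}$ is a root of $(s-k)_k$. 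Hence only the $i=k$ term (which appears only when $j=k$, with $c_k=1$) survives, and for $j=k$ the sum reduces to $q_k(k+1)\prod_{l=0}^M \Gamma(k+1+\nu_l) = (k+1-k)_k / \prod \Gamma(k+1+\nu_j) \cdot \prod\Gamma(k+1+\nu_j)$. Here $(1)_k = k!$... so a normalization subtlety appears; I would double-check the constant by instead pairing $Q_k$ directly against $x^k$, using $\int_0^\infty x^k Q_k(x)\ud x = q_k(k+1)\prod_{j=0}^M\Gamma(k+1+\nu_j)$ and $(1)_k = k!$, which forces the normalizing constant to include the $(-1)^k$ and the product of Gammas exactly as written — this pins down the constant in \eqref{QkRodr}.

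The main obstacle I anticipate is purely bookkeeping: tracking the constant and the sign through the differentiation-under-the-integral step and the index shift, and making sure the Mellin–Barnes contour can legitimately be deformed so that all the manipulations (moving $(\ud/\ud x)^k$ inside, shifting $s\mapsto s-k$) are justified — i.e.\ that the relevant contour stays to the right of the poles of $\prod_j\Gamma(s+\nu_j)$ and that the resulting integrals converge. None of this is deep; the real content is the clean algebraic identity $\bigl(\tfrac{\ud}{\ud x}\bigr)^k(x^k\,\cdot\,)$ translating to multiplication by $(s-k)_k$ on the Mellin side, after which uniqueness of the biorthogonal family does the rest.
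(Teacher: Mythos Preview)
Your approach is correct and close in spirit to the paper's; both verify that the Rodrigues expression \eqref{QkRodr} lies in the span of $w_0,\dots,w_k$ and is biorthogonal to $1,x,\dots,x^k$, then read off $q_k$. The paper reverses your order: it first establishes $\int_0^\infty x^l\bigl(\tfrac{\ud}{\ud x}\bigr)^k(x^k w_0)\,\ud x=0$ for $l<k$ by $k$-fold integration by parts (with explicit asymptotics of $w_0$ at $0$ and $\infty$ to kill the boundary terms), then computes the $l=k$ case via \eqref{moments} to fix the constant, and only afterwards differentiates under the Mellin--Barnes integral to obtain $q_k$. Your route via Mellin moments is equivalent, just more algebraic.

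The ``normalization subtlety'' you flag is a genuine indexing slip: the Mellin transform of $Q_k$ is $q_k(s)\prod_{j=1}^M\Gamma(s+\nu_j)$ (product from $j=1$, not $j=0$), because $w_0$ carries only the factors $\Gamma(s+\nu_1),\dots,\Gamma(s+\nu_M)$. Hence
\[
\int_0^\infty x^k Q_k(x)\,\ud x
= q_k(k+1)\prod_{j=1}^M\Gamma(k+1+\nu_j)
= \frac{(1)_k}{\prod_{j=0}^M\Gamma(k+1+\nu_j)}\cdot\prod_{j=1}^M\Gamma(k+1+\nu_j)
= \frac{k!}{\Gamma(k+1+\nu_0)}=1,
\]
since $\nu_0=0$. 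So the $k!$ you were worried about cancels against the missing $j=0$ factor, and the constant in \eqref{QkRodr} is confirmed without further adjustment.
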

\begin{proof}
It is easy to see after applying an integration by parts $k$ times that
\[ \int_0^{\infty} x^l \left( \frac{\ud}{\ud x} \right)^k (x^k w_0(x))  \ud x = 0, \qquad \text{for } l < k. \]
Note that integrated terms do not contribute, since
\begin{equation} \label{w0:at0}
    w_0(x) = O(x^{\alpha} (\log x)^{r-1}), \qquad \text{ as } x \to 0+,
    \end{equation}
with $\alpha = \min(\nu_1, \ldots, \nu_M) > -1$ and $r = \# \{ j
\mid \nu_j = \alpha\}$, which can be deduced from properties of the
Mellin transform \eqref{moments}; see e.g.\  \cite[Theorem
4]{FlGoDu}, and since for $x\to+\infty$, we have
\[ w_0(x) = O\left(x^{\theta}e^{-Mx^{1/M}} \right), \qquad
\theta =\frac{1}{M} \left(\tfrac{1}{2}(1-M) + \sum_{j=1}^M \nu_j
\right); \] see \cite[Theorem 5.7.5]{Luke}.

Similarly,
\begin{align*}
    \int_0^{\infty} x^k \left( \frac{\ud}{\ud x} \right)^k (x^k w_0(x)) \ud x & =
    (-1)^k k! \int_0^{\infty}  x^k w_0(x) \ud x \\
    & = (-1)^k \prod_{j=0}^M \Gamma(k+1+\nu_j),
    \end{align*}
where we recall \eqref{moments}  and the fact that $\nu_0 = 0$.
Thus if $Q_k$ is defined by \eqref{QkRodr}, then we have
\begin{equation} \label{Qkortho}
    \int_0^{\infty} x^l Q_k(x) \ud x = \delta_{l,k}, \qquad \text{for } l = 0, 1, \ldots, k.
    \end{equation}

Since
\[ x^k w_0(x) = \frac{1}{2\pi i} \int_{c-i\infty}^{c+i\infty}
\prod_{j=0}^M \Gamma(s+\nu_j) x^{k-s} \ud s, \] we find by taking
$k$ derivatives that
\[ \left( \frac{\ud}{\ud x} \right)^k \left( x^k w_0(x) \right)
    = \frac{1}{2\pi i} \int_{c-i\infty}^{c+i\infty}
    (-1)^k (s-k)_k \prod_{j=0}^M \Gamma(s+\nu_j) x^{-s} \ud s. \]
Thus
\begin{equation} \label{Qkintegral0}
    Q_k(x) = \frac{1}{2\pi i} \int_{c-i\infty}^{c+i\infty} q_k(s) \prod_{j=0}^M \Gamma(s+\nu_j) x^{-s} \ud s,
    \end{equation}
with $q_k$ as in \eqref{qk}. This proves that $Q_k$ belongs to the linear span of $w_0, \ldots, w_{k-1}$
and \eqref{Qkortho} shows that it is indeed the biorthogonal function.
\end{proof}

Note that \eqref{QkRodr} is a Rodrigues-type formula for $Q_k$. Note also
that \eqref{Qkintegral0} is an integral representation, which because of \eqref{qk}
we may also write as
\begin{align} \label{Qkintegral}
    Q_k(x) & = \frac{1}{2\pi i \prod_{j=0}^M \Gamma(k+\nu_j+1)}
    \int_{c-i\infty}^{c+i\infty} \frac{\prod_{j=0}^M \Gamma(s+\nu_j)}{\Gamma(s-k)} x^{-s} \ud
    s.
    \end{align}
By \eqref{def:Meijer}, we can identify \eqref{Qkintegral} as a Meijer G-function:
\begin{align} \label{QkMeijerG}
  Q_k(x) =
    \frac{1}{\prod_{j=0}^M \Gamma(k+\nu_j+1)}
    \mathop{{G^{{M+1,0}}_{{1,M+1}}}\/}\nolimits\!\left({-k \atop \nu_0, \nu_{1},  \ldots, \nu_M
    } \Big{|} x \right).
    \end{align}
Up to a multiplicative constant and an easy transformation of the
Meijer G-function, \eqref{QkMeijerG} is the same as \cite[formula
(49)]{AIK}.

\subsection{Integral representation for $P_n$}
There is a similar integral representation for $P_n$.
\begin{proposition}
We have  for $x > 0$,
\begin{equation} \label{Pnintegral}
    P_n(x) =  \frac{\prod_{j=0}^M \Gamma(n+\nu_j+1)}{2\pi i} \oint_{\Sigma}
    \frac{\Gamma(t-n)}{\prod_{j=0}^M \Gamma(t+\nu_j+1)} x^t  \ud t,
    \end{equation}
where $\Sigma$ is a closed contour that encircles $0, 1, \ldots, n$ once in the positive direction.
\end{proposition}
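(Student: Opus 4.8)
The plan is to verify directly that the contour integral in \eqref{Pnintegral} defines a monic polynomial of degree $n$ satisfying the multiple orthogonality conditions that characterize $P_n$. First I would observe that the integrand
\[
f(t) = \frac{\Gamma(t-n)}{\prod_{j=0}^M \Gamma(t+\nu_j+1)} x^t
\]
is meromorphic in $t$ with poles of $\Gamma(t-n)$ at $t = n, n-1, \ldots$; the only poles inside $\Sigma$ are at $t = 0, 1, \ldots, n$ (recall $\nu_j > -1$ so the factors $\Gamma(t+\nu_j+1)^{-1}$ in the denominator contribute zeros, not poles, and in particular they do not create poles to cancel). Hence by the residue theorem the integral is a finite sum of residues. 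At $t = m$ with $0 \le m \le n$, the residue of $\Gamma(t-n)$ is $\frac{(-1)^{n-m}}{(n-m)!}$, so each term contributes a constant multiple of $x^m$, showing that $P_n(x)$ as defined is a polynomial of degree at most $n$ in $x$.

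Next I would extract the leading coefficient. The $x^n$ term comes from the residue at $t = n$, where $\operatorname{Res}_{t=n}\Gamma(t-n) = 1$ and the remaining factors evaluate to $\prod_{j=0}^M \Gamma(n+\nu_j+1)^{-1}$; multiplying by the prefactor $\prod_{j=0}^M \Gamma(n+\nu_j+1)$ gives coefficient $1$, so $P_n$ is monic of degree exactly $n$. It then remains to check the biorthogonality/multiple orthogonality. The cleanest route is to verify $\int_0^\infty P_n(x)\, Q_k(x)\,\ud x = 0$ for $k = 0, 1, \ldots, n-1$, which by \eqref{def:Kn}--\eqref{PkQkbio} (together with the fact that $\{Q_0,\ldots,Q_{n-1}\}$ spans the same space as $\{w_0,\ldots,w_{n-1}\}$, hence the same space as the MOP test functions of Lemma~\ref{lem:mop}) characterizes $P_n$ up to the monic normalization already established. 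Using the Mellin-type representation \eqref{Qkintegral0} for $Q_k$ and interchanging the order of integration, the inner $x$-integral $\int_0^\infty x^t x^{-s}\,\ud x$ is (formally) a delta-type pairing; more rigorously I would instead compute $\int_0^\infty P_n(x)\, x^j w_i(x)\,\ud x$ for the basis \eqref{MOPbasis} with $i + jM < n$, writing $P_n(x) = \sum_{m=0}^n c_m x^m$ from the residue sum and using the moment formula \eqref{moments} (shifted) to get $\int_0^\infty x^m x^j w_i(x)\,\ud x = (m+j+\nu_1)_i \prod_{l=1}^M (m+j+\nu_l) \cdots$, a product of Gamma functions; the resulting finite sum over $m$ should telescope to zero precisely when $i+jM < n$, because it amounts to a finite difference of order $n$ applied to a polynomial of degree $i + jM < n$ (this is exactly the combinatorial shadow of the Pochhammer factor $q(s)$ of degree $k+jM$ in Lemma~\ref{lem:mop}).

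The main obstacle I anticipate is the justification of interchanging the contour integral over $\Sigma$ (or the vertical Mellin-Barnes line) with the improper integral $\int_0^\infty \cdots \ud x$, and controlling convergence at both endpoints: near $x = 0$ one uses the local behavior $w_0(x) = O(x^\alpha (\log x)^{r-1})$ recorded in \eqref{w0:at0}, and near $x = \infty$ the super-exponential decay $w_0(x) = O(x^\theta e^{-Mx^{1/M}})$ quoted from \cite[Theorem 5.7.5]{Luke}; with $P_n$ a polynomial both ends are fine. A convenient shortcut that sidesteps the Fubini bookkeeping entirely is the following: since $P_n$ is already known to be a monic polynomial of degree $n$ by the residue computation, it suffices to identify its moments against the $w_i$. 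Alternatively, one can avoid interchanging integrals by noting that applying the Mellin transform $x \mapsto \int_0^\infty (\cdot) x^{s-1}\ud x$ to \eqref{Pnintegral} term by term over the (finite) residue sum, and comparing with \eqref{moments}, directly yields $\int_0^\infty P_n(x) x^{s-1}\,\ud x$ as a ratio of Gamma functions whose zeros at $s = -\nu_1 - i, \ldots$ encode the orthogonality; I would present whichever of these two forms reads most cleanly. Either way the heart of the argument is the residue calculus plus the elementary fact that an $n$-th finite difference annihilates polynomials of degree $< n$.
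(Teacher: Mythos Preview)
Your proposal is correct and follows the same overall architecture as the paper: evaluate the contour integral by residues to see that it defines a monic polynomial of degree $n$, and then verify the required orthogonality relations.

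Where you diverge from the paper is in the choice of test functions for the orthogonality step. The paper does not test against $Q_k$ or against the MOP basis $x^j w_i$; it tests against the alternative weights $\widetilde{w}_k$ of \eqref{wktilde}, whose moments $\int_0^\infty x^t\,\widetilde{w}_k(x)\,\ud x = (t+1)^k \prod_{j=1}^M \Gamma(t+\nu_j+1)$ cancel the product of Gammas in the denominator of the integrand exactly. After interchanging integrals this leaves
\[
\oint_\Sigma \frac{\Gamma(t-n)(t+1)^k}{\Gamma(t+1)}\,\ud t
= \oint_\Sigma \frac{(t+1)^k}{t(t-1)\cdots(t-n)}\,\ud t,
\]
which vanishes for $k\le n-1$ by moving the contour to infinity. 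This is the contour-integral avatar of your finite-difference observation, but it avoids ever expanding $P_n$ into its residue sum. Your route via $x^j w_i$ and the explicit coefficients works too (the ratio of Gammas collapses to a polynomial of degree $jM+i$ in $m$, and $\sum_{m=0}^n \frac{(-1)^{n-m}}{m!(n-m)!}\,q(m)$ is indeed an $n$-th finite difference), but it is more laborious; the paper's choice of $\widetilde{w}_k$ makes the cancellation one line. Your first suggested route via $Q_k$ and a delta-pairing in $\int_0^\infty x^{t-s}\,\ud x$ is the one that would actually need careful justification and is best avoided.

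One minor wording issue: you write that $\Gamma(t+\nu_j+1)^{-1}$ ``contributes zeros, not poles''; since $1/\Gamma$ is entire this was never in doubt. The relevant point is that its zeros lie at $t = -\nu_j-1,-\nu_j-2,\ldots$, hence (because $\nu_j>-1$) they do not cancel any of the poles of $\Gamma(t-n)$ at $t=0,1,\ldots,n$; the poles at negative integers are in fact killed by the factor $1/\Gamma(t+1)$ coming from $\nu_0=0$, though this does not matter since they lie outside $\Sigma$.
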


\begin{proof} In the proof we assume that $P_n$ is given by \eqref{Pnintegral} and
we show that $P_n$ is a monic polynomial of degree $n$ satisfying
\begin{align} \label{Pnbio}
    \int_0^{\infty} P_n(x) \widetilde{w}_k(x) \ud x = 0, \qquad k=0, \ldots,
    n-1,
    \end{align}
where $\widetilde{w}_k$ is defined in \eqref{wktilde}.

The integrand in the right-hand side of \eqref{Pnintegral} is
meromorphic on $\mathbb C$ with simple poles at $0, 1, \ldots, n$
(the poles of the numerator at the negative integers are cancelled
by the poles of the factor $\Gamma(t+1)$ in the denominator). Thus
by the residue theorem
\[ P_n(x) = \prod_{j=0}^M \Gamma(n+\nu_j+1) \sum_{l=0}^n \Res_{t=l}
    \left( \frac{\Gamma(t-n)}{\prod_{j=0}^M \Gamma(t+\nu_j+1)} \right) x^l. \]
We can evaluate the residues to obtain
\begin{equation} \label{Pnhyper}
    P_n(x) =  \sum_{l=0}^n  \frac{(-1)^{n-l}}{(n-l)!}
    \frac{\prod_{j=0}^M \Gamma(n+\nu_j+1)}{\prod_{j=0}^M \Gamma(l+\nu_j+1)} x^l,
        \end{equation}
which shows that $P_n$ is a monic polynomial of degree $n$.

To verify \eqref{Pnbio} we use
\[ \int_0^{\infty} x^t \widetilde{w}_k(x) \ud x = (t+1)^k \prod_{j=1}^{M} \Gamma(t+\nu_j+1), \]
which follows from \eqref{wktilde} and the inversion formula for Mellin transforms.
Then we can compute by \eqref{Pnintegral} and an interchange of integrals,
\begin{align*}
    &\int_0^{\infty} P_n(x) \widetilde{w}_k(x) \ud x \\
    & =
    \frac{\prod_{j=0}^M \Gamma(n+\nu_j+1)}{2\pi i} \oint_{\Sigma}
        \frac{\Gamma(t-n)}{\prod_{j=0}^M \Gamma(t+\nu_j+1)} (t+1)^k \prod_{j=1}^{M} \Gamma(t+\nu_j +1) \ud t \\
        & = \frac{\prod_{j=0}^M \Gamma(n+\nu_j+1)}{2\pi i} \oint_{\Sigma}
        \frac{\Gamma(t-n) (t+1)^k}{\Gamma(t+1)}  \ud t \\
        & = \frac{\prod_{j=0}^M \Gamma(n+\nu_j+1)}{2\pi i} \oint_{\Sigma}
        \frac{(t+1)^k}{t(t-1) \cdots (t-n)}  \ud t.
    \end{align*}
The  remaining integrand is a rational function that behaves like $O(t^{k-n-1})$ as $t \to \infty$.
The contour $\Sigma$ encircles all the poles once in the positive direction. Thus by moving the contour to infinity,
we find that the integral vanishes for $k \leq n-1$, which is the required biorthogonality \eqref{Pnbio}.
\end{proof}

The formula \eqref{Pnhyper} shows that $P_n$ is a hypergeometric polynomial
\[ P_n(x) =  (-1)^n \prod_{j=1}^M \frac{\Gamma(n+\nu_j+1)}{\Gamma(\nu_j+1)}
    {\; }_1 F_M \left({-n \atop 1+ \nu_1, \ldots, 1+\nu_M} \Big{|} x \right), \]
    as in \cite[formula (44)]{AIK}.
We can also identify $P_n$ in \eqref{Pnintegral} as a Meijer G-function:
\begin{align} \label{PnMeijerG}
  P_n(x) =
    -\prod_{j=0}^M\Gamma(n+\nu_j+1)\mathop{{G^{{0,1}}_{{1,M+1}}}\/}\nolimits\!\left({n+1
\atop -\nu_0, -\nu_{1},  \ldots, -\nu_{M-1}, -\nu_{M}}\Big{|}
x\right),
    \end{align}
which is equivalent to \cite[formula (45)]{AIK}.

\section{Recurrence relations}\label{sec:recurrence rel}
By Lemma \ref{lem:mop} and general theory of MOPs (cf. \cite[Chapter
23]{IsBook}), it follows that the polynomials $P_n$ satisfy an $M+2$
term recurrence relation
\begin{equation} \label{Pnrecurrence}
    x P_n(x) = P_{n+1}(x) + \sum_{k=0}^M a_{k,n} P_{n-k}(x).
\end{equation}
There is a dual recurrence relation
\begin{equation} \label{Qnrecurrence}
    x Q_n(x) = Q_{n-1}(x) + \sum_{k=0}^M b_{k,n} Q_{n+k}(x),
    \end{equation}
where because of the biorthogonality \eqref{PkQkbio},
\begin{align*}
    a_{k,n} = \int_0^{\infty} x P_n(x) \, Q_{n-k}(x) \, \ud x, \qquad
  b_{k,n}  = \int_0^{\infty} P_{n+k}(x) \, x Q_n(x) \, \ud x.
    \end{align*}
Therefore
\begin{equation} \label{akandbk}
    a_{k,n}  = b_{k, n-k}.
    \end{equation}
It is the aim of this section to calculate these recurrence
coefficients explicitly.

\subsection{Coefficients $b_{k,n}$}
\begin{proposition}
We have for $k = 0, \ldots, M$,
\begin{equation} \label{bkn}
b_{k,n} = \left( \prod_{j=0}^M (n+\nu_j+1)_k \right)
        \sum_{ j= 0}^{k+1} \frac{(-1)^{k+1-j}}{j! (k+1-j)!}
            \prod_{i=0}^M (n +j + \nu_i).
            \end{equation}
\end{proposition}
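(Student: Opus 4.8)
The plan is to compute $b_{k,n} = \int_0^\infty P_{n+k}(x)\, x Q_n(x)\,\ud x$ directly from the integral representations already established. I would substitute the contour-integral form \eqref{Pnintegral} for $P_{n+k}$ and the Rodrigues-type/Mellin form for $x Q_n(x)$, then interchange the order of integration. The key input is the Mellin transform of $x Q_n(x)$: since $Q_n(x) = \frac{1}{2\pi i}\int q_n(s)\prod_{j=0}^M \Gamma(s+\nu_j) x^{-s}\,\ud s$ with $q_n(s) = (s-n)_n / \prod_{j=0}^M \Gamma(n+\nu_j+1)$, the inversion formula for the Mellin transform gives $\int_0^\infty x^t\, x Q_n(x)\,\ud x = \int_0^\infty x^{t+1} Q_n(x)\,\ud x = q_n(t+2)\prod_{j=0}^M \Gamma(t+2+\nu_j) / 1$ — more precisely $\frac{(t+2-n)_n}{\prod_{j=0}^M \Gamma(n+\nu_j+1)}\prod_{j=0}^M \Gamma(t+2+\nu_j)$, valid for $\Re t$ large enough. (One must check convergence near $x=0$ and $x=\infty$ using the asymptotics \eqref{w0:at0} and the exponential decay quoted in the proof of \eqref{QkRodr}, just as in that proof; this is routine.)

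Next I would insert this into the double integral. After the interchange, $b_{k,n}$ becomes
\[
b_{k,n} = \frac{\prod_{j=0}^M \Gamma(n+k+\nu_j+1)}{2\pi i \prod_{j=0}^M \Gamma(n+\nu_j+1)}
\oint_\Sigma \frac{\Gamma(t-n-k)\,(t+2-n)_n \prod_{j=0}^M \Gamma(t+2+\nu_j)}{\prod_{j=0}^M \Gamma(t+\nu_j+1)}\,\ud t,
\]
where $\Sigma$ encircles $0,1,\ldots,n+k$. Now the Gamma-function ratio collapses: $\prod_{j=0}^M \Gamma(t+2+\nu_j)/\prod_{j=0}^M \Gamma(t+\nu_j+1) = \prod_{j=0}^M (t+\nu_j+1)$, a polynomial of degree $M+1$ in $t$. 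Also $\Gamma(t-n-k)(t+2-n)_n = \Gamma(t-n-k)\,\Gamma(t+2)/\Gamma(t+2-n)$; combined with $\Gamma(t-n-k) = \Gamma(t+1)/[(t-n-k)\cdots t]$ this reduces the whole integrand to a rational function. So $b_{k,n}$ is $\prod_{j=0}^M \Gamma(n+k+\nu_j+1)/\prod_{j=0}^M \Gamma(n+\nu_j+1) = \prod_{j=0}^M (n+\nu_j+1)_k$ times a sum of residues of an explicit rational function at the integers encircled by $\Sigma$, exactly mirroring the computation at the end of the proof of \eqref{Pnintegral}.

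The remaining task, and the main obstacle, is the bookkeeping: evaluating $\frac{1}{2\pi i}\oint_\Sigma R(t)\,\ud t$ as a sum of residues and recognizing it as the double-indexed sum in \eqref{bkn}. Concretely, after simplification the integrand should have the form $\frac{\prod_{i=0}^M (t+\nu_i+1)}{(t-n)(t-n-1)\cdots(t-n-k+1)\,(t+1)t(t-1)\cdots(t-n+1)}$ or something closely equivalent — a rational function with numerator of degree $M+1$ and denominator of degree $n+k+1$, whose poles lie at a known set of integers. Since $M+1 - (n+k+1) = M-n-k < -1$ for the relevant range, pushing $\Sigma$ to infinity shows the full residue sum could also be computed the other way, but the cleanest route is: shift the contour or re-expand so that only the poles at $t = n, n+1, \ldots, n+k$ contribute effectively (the lower poles $t=0,\ldots,n-1$ cancel because $(t+2-n)_n$ vanishes there when... — actually one checks $(t+2-n)_n = (t-n+2)(t-n+3)\cdots(t+1)$ has zeros at $t=n-1,n-2,\ldots$, killing most lower residues). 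Then reindex the surviving $k+1$ residues (at $t=n, n+1, \ldots, n+k$; equivalently write $t = n+j-1$ or $t=n+j$ appropriately) and match the alternating factorials $\frac{(-1)^{k+1-j}}{j!(k+1-j)!}$ against the residue of $\Gamma(t-n-k)$-type factors, together with the product $\prod_{i=0}^M(n+j+\nu_i)$ coming from the numerator $\prod_{i=0}^M(t+\nu_i+1)$ evaluated at the pole. The index $j$ ranging over $0,\ldots,k+1$ in \eqref{bkn} (one more term than the number of naive poles) suggests a pole at infinity or at $t=n-1$ also contributes; I would track this carefully rather than trusting a count. Verifying a couple of small cases ($k=0$, where the sum has two terms and should give $b_{0,n} = \sum_{j=0}^1 \frac{(-1)^{1-j}}{j!(1-j)!}\prod_i(n+j+\nu_i) = \prod_i(n+1+\nu_i) - \prod_i(n+\nu_i)$; and $k=1$) is the sanity check I would run before committing to the general reindexing.
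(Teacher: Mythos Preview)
Your strategy is sound and, once the algebra is done correctly, converges to exactly the same residue sum as the paper. There is one concrete slip that would derail the computation as written: you are quoting the representation $Q_n(x)=\frac{1}{2\pi i}\int q_n(s)\prod_{j=0}^M\Gamma(s+\nu_j)\,x^{-s}\,\ud s$ with the product over $j=0,\ldots,M$. That formula in \eqref{Qkintegral0} has a typo; the product should run over $j=1,\ldots,M$ (as is used, correctly, at the start of the paper's own proof of this proposition, and as is consistent with \eqref{Qkintegral}). With the spurious extra $\Gamma(s)$ your integrand acquires a stray $\Gamma(t+2)$ and the claim that it ``reduces to a rational function'' becomes false. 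If instead you read off the Mellin transform from \eqref{Qkintegral}, namely $\int_0^\infty x^{t+1}Q_n(x)\,\ud x=\frac{\prod_{j=0}^M\Gamma(t+2+\nu_j)}{\Gamma(t+2-n)\prod_{j=0}^M\Gamma(n+\nu_j+1)}$, then after the Gamma cancellations the integrand is exactly the rational function
\[
\frac{\prod_{i=0}^M(t+\nu_i+1)}{(t-n+1)(t-n)\cdots(t-n-k)},
\]
with $k+2$ simple poles at $t=n-1,n,\ldots,n+k$. The residue at $t=n-1+j$ gives the $j$th summand of \eqref{bkn}, so there is no ``extra'' term to hunt for; the pole at $t=n-1$ accounts for $j=0$.

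The paper takes a cleaner route. Instead of computing the inner product, it shifts $s\mapsto s+1$ in the Mellin representation of $Q_n$ and observes that the recurrence is equivalent to expanding the polynomial $q_n(s+1)\prod_{j=1}^M(s+\nu_j)$ in the basis $q_{n-1},q_n,\ldots,q_{n+M}$. Dividing by $q_{n-1}$ and substituting $s=t+n$ reduces this to expanding the degree-$(M+1)$ polynomial $f(t)=\prod_{j=0}^M(t+n+\nu_j)$ in the basis $1,\,(t)_1,\,(t-1)_2,\ldots,(t-M)_{M+1}$, with coefficients extracted by the small contour integral \eqref{bkncontour}. Under the substitution $t\mapsto t-n+1$ your contour integral becomes exactly \eqref{bkncontour}, so the two approaches merge; the paper's framing simply avoids carrying the $n$-dependent pole--zero cancellations that your route has to process first.
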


\begin{proof}
We have from \eqref{Qkintegral0}, after a change of variable $s
\mapsto s+1$,
\begin{align*}
    x Q_n(x) & = \frac{1}{2\pi i} \int_{c-i\infty}^{c+i\infty}
    q_n(s) \prod_{j=1}^M \Gamma(s+\nu_j) x^{-s+1} \ud s \\
    & = \frac{1}{2\pi i} \int_{c-i\infty}^{c+i\infty}
    q_n(s+1) \prod_{j=1}^M \Gamma(s+\nu_j+1) x^{-s} \ud s \\
    & = \frac{1}{2\pi i} \int_{c-i\infty}^{c+i\infty}
    q_n(s+1) \prod_{j=1}^M (s+\nu_j) \prod_{j=1}^M \Gamma(s+\nu_j) x^{-s} \ud s.
    \end{align*}
Then $q_n(s+1) \prod_{j=1}^M (s+\nu_j)$ is a polynomial in $s$ of
degree $n+M$ and it is our task to show that
\begin{equation} \label{qnshift}
    q_n(s+1) \prod_{j=1}^M (s+\nu_j)
    = q_{n-1}(s) + \sum_{k=0}^M b_{k,n} q_{n+k}(s)
\end{equation}
with $b_{k,n}$ given by \eqref{bkn}.

By \eqref{qk} we have that all terms in \eqref{qnshift} are zero for
$s=1, \ldots, n-1$, i.e., all terms are divisible by $q_{n-1}(s)$.
If we do this division and use \eqref{qk} then we find that we have
to prove
\[
    \prod_{j=0}^M \frac{s+\nu_j}{n+\nu_j} =  1 + \sum_{k=0}^M  \frac{b_{k,n}}{\prod_{j=0}^M (n+\nu_j)_{k+1}} (s-n-k)_{k+1}.
    \]
Write $s=t+n$. Then we have to prove
\begin{equation} \label{bkntoprove}
    f(t) = \prod_{j=0}^M (n+\nu_j) + \sum_{k=0}^M
    \frac{b_{k,n}}{\prod_{j=0}^M (n+\nu_j+1)_k} (t-k)_{k+1},
    \end{equation}
as an identity for polynomials in $t$, where
\begin{equation} \label{ft} f(t) = \prod_{j=0}^M (t+n + \nu_j).
\end{equation}
 Both sides of \eqref{bkntoprove} have degree $M+1$ and for $t=0$ the identity \eqref{bkntoprove} is valid.
The polynomials $t \mapsto (t-k)_{k+1}$ for $k=0, \ldots, m$ are a
basis for the vector space of polynomials of degree $\leq M+1$ that
vanish at $t=0$. Then it is clear that there exists coefficients
$b_{k,n}$ such that \eqref{bkntoprove} holds.

By contour integration we obtain from \eqref{bkntoprove}
\begin{equation} \label{bkncontour}
    \frac{b_{k,n}}{\prod_{j=0}^M (n+\nu_j+1)_k} = \frac{1}{2\pi i} \oint_{\Sigma} \frac{f(t)}{(t-k-1)_{k+2}}
    \ud t, \qquad k = 0, \ldots, M, \end{equation}
where $\Sigma$ is a closed contour that encircles the points $0,
\ldots, k$ once in the positive direction. This leads by the residue
theorem to
\begin{align*}
    b_{k,n} & = \left(\prod_{j=0}^M  (n+\nu_j+1)_k \right)
        \sum_{j=0}^{k+1} (-1)^{k+1-j} \frac{f(j)}{j! (k+1-j)!},
\end{align*}
which gives \eqref{bkn} in view of the definition \eqref{ft} of
$f(t)$.
\end{proof}

\subsection{Coefficients $a_{k,n}$}

Because of \eqref{akandbk} we immediately find an expression for the
recurrence coefficients $a_{k,n}$.
\begin{corollary}
We have for $k=0, \ldots, M$,
\begin{align} \label{akn}
    a_{k,n} & = \left(\prod_{j=0}^M  (n-k+\nu_j+1)_k \right)
        \sum_{j=0}^{k+1} (-1)^{k+1-j} \frac{\prod_{i=0}^M (n-k+j+\nu_i)}{j! (k+1-j)!}.
\end{align}
Reversing the order of summation we also have
\[ a_{k,n} =
\left(\prod_{j=0}^M  (n-k+\nu_j+1)_k \right)
        \sum_{j=0}^{k+1} (-1)^{j} \frac{\prod_{i=0}^M (n+1-j+\nu_i)}{j! (k+1-j)!}. \]
\end{corollary}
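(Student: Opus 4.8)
The plan is to derive the corollary directly from the symmetry relation \eqref{akandbk}, $a_{k,n} = b_{k,n-k}$, combined with the closed form \eqref{bkn} for $b_{k,n}$ proved in the preceding proposition. Concretely, I would substitute $n \mapsto n-k$ throughout \eqref{bkn}: the prefactor $\prod_{j=0}^M (n+\nu_j+1)_k$ becomes $\prod_{j=0}^M (n-k+\nu_j+1)_k$, and the summand $\frac{(-1)^{k+1-j}}{j!\,(k+1-j)!}\prod_{i=0}^M (n+j+\nu_i)$ becomes $\frac{(-1)^{k+1-j}}{j!\,(k+1-j)!}\prod_{i=0}^M (n-k+j+\nu_i)$. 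This is precisely \eqref{akn}, so the first assertion of the corollary needs nothing beyond this clerical substitution.

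For the second (reversed-order) expression, I would reindex the finite sum in \eqref{akn} by the involution $j \mapsto k+1-j$, which maps $\{0,1,\ldots,k+1\}$ onto itself. Under this substitution $(-1)^{k+1-j}$ turns into $(-1)^{j}$, the factor $j!\,(k+1-j)!$ is left invariant, and $\prod_{i=0}^M (n-k+j+\nu_i)$ becomes $\prod_{i=0}^M \bigl(n-k+(k+1-j)+\nu_i\bigr) = \prod_{i=0}^M (n+1-j+\nu_i)$. Collecting these, and leaving the summation-free Pochhammer prefactor untouched, yields $a_{k,n} = \bigl(\prod_{j=0}^M (n-k+\nu_j+1)_k\bigr)\sum_{j=0}^{k+1}(-1)^{j}\,\frac{\prod_{i=0}^M (n+1-j+\nu_i)}{j!\,(k+1-j)!}$, as stated.

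There is essentially no analytic difficulty here; the corollary is a purely formal rewriting of \eqref{bkn} via \eqref{akandbk}. The only thing requiring genuine care is bookkeeping: one must not let the reindexing act on the index $j$ appearing in the prefactor product $\prod_{j=0}^M$ (a notational clash with the summation variable), and one should check the boundary terms $j=0$ and $j=k+1$ to confirm that the involution genuinely fixes the index set and introduces no sign or range error. If an independent sanity check is wanted, one can expand both forms for small $M$ (e.g.\ $M=1$ with $k=0,1$) and verify agreement, but this is not logically necessary for the proof.
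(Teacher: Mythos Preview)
Your proof is correct and follows exactly the approach of the paper: substitute $n\mapsto n-k$ in \eqref{bkn} via \eqref{akandbk} to obtain \eqref{akn}, then reindex the sum by $j\mapsto k+1-j$ to obtain the second expression. There is nothing to add.
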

\begin{proof} Use \eqref{akandbk}, \eqref{bkn} and reverse the order of summation.
\end{proof}

From \eqref{akn} we see that $a_{k,n}$ is a polynomial expression in
$n$, which seems to be of degree $k(M+1) + M+1 =  (k+1)(M+1)$.
However there is a cancellation in the leading order terms and
$a_{k,n}$ is actually a polynomial in $n$ of degree $(k+1)M$.

\begin{lemma}
For every $k$ we have
\[ a_{k,n} = \binom{M+1}{k+1}  n^{(k+1)M} + O\left(n^{(k+1)M-1}\right). \]
\end{lemma}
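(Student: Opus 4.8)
The plan is to extract the leading-order behavior in $n$ from the explicit formula \eqref{akn} and show that the combinatorial sum collapses to a binomial coefficient. Write $a_{k,n} = \left(\prod_{j=0}^M (n-k+\nu_j+1)_k\right) \cdot S_{k,n}$, where
\[ S_{k,n} = \sum_{j=0}^{k+1} (-1)^{k+1-j} \frac{\prod_{i=0}^M (n-k+j+\nu_i)}{j!\,(k+1-j)!}. \]
The prefactor $\prod_{j=0}^M (n-k+\nu_j+1)_k$ is a product of $k(M+1)$ linear factors in $n$, each monic, so it equals $n^{k(M+1)} + O(n^{k(M+1)-1})$. Hence the claim $a_{k,n} = \binom{M+1}{k+1} n^{(k+1)M} + O(n^{(k+1)M-1})$ will follow once I show that $S_{k,n} = \binom{M+1}{k+1} n^{M-k} + O(n^{M-k-1})$, since $k(M+1) + (M-k) = (k+1)M$.

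First I would analyze $S_{k,n}$. Each term $\prod_{i=0}^M (n-k+j+\nu_i)$ is, for fixed $j$, a monic polynomial in $n$ of degree $M+1$; expanding, its coefficient of $n^{M+1-m}$ is $e_m(\text{shifted parameters})$ where the shifts depend polynomially on $j$ of degree $m$. The key observation is the finite-difference identity: for any polynomial $g$ of degree $d$,
\[ \sum_{j=0}^{k+1} (-1)^{k+1-j} \binom{k+1}{j} g(j) = \begin{cases} 0, & d < k+1, \\ (\text{lead coeff of } g)\,(k+1)!, & d = k+1. \end{cases} \]
Since $\frac{1}{j!\,(k+1-j)!} = \frac{1}{(k+1)!}\binom{k+1}{j}$, the sum $S_{k,n}$ is $\frac{1}{(k+1)!}$ times an alternating binomial sum of the polynomial $j \mapsto \prod_{i=0}^M(n-k+j+\nu_i)$. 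Viewed as a polynomial in $j$ (with $n$ as parameter) this has degree $M+1 \geq k+1$, so all contributions from $n$-powers higher than $n^{M-k}$ come from the part of the polynomial that has $j$-degree $< k+1$, and those vanish by the identity above. The surviving top-order term comes from the $j$-degree-exactly-$(k+1)$ part: the coefficient of $n^{M-k}$ in $\prod_{i=0}^M(n-k+j+\nu_i)$ as a polynomial in $(n,j)$ is, to leading order, the coefficient of $n^{M-k}$ in the product of $(n+j)$-type factors, which contributes $\binom{M+1}{k+1} j^{k+1} n^{M-k} + (\text{lower }j\text{-degree})$. Applying the finite-difference identity with $g(j) = j^{k+1}$ (lead coeff $1$) gives $(k+1)!$, so $S_{k,n} = \frac{1}{(k+1)!}\binom{M+1}{k+1}(k+1)! \, n^{M-k} + O(n^{M-k-1}) = \binom{M+1}{k+1} n^{M-k} + O(n^{M-k-1})$.

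The main obstacle, and the step requiring the most care, is bookkeeping the $(n,j)$-bidegree of $\prod_{i=0}^M (n-k+j+\nu_i)$: I need to confirm that every monomial $n^a j^b$ appearing in it with $a > M-k$ necessarily has $b < k+1$ (so it dies under the alternating sum), and that the unique bidegree-relevant piece contributing to $n^{M-k}$ after summation is exactly $\binom{M+1}{k+1} j^{k+1} n^{M-k}$. This is clear once one writes $\prod_{i=0}^M (n-k+j+\nu_i) = \prod_{i=0}^M \big((n+j) + (\nu_i - k)\big)$ and expands by elementary symmetric functions in the constants $\nu_i-k$: the total degree in $(n+j)$ plus the symmetric-function degree is $M+1$, and choosing the $j$-power $b$ from a factor of $(n+j)^c$ forces $a+b \le c \le M+1$; if $a > M-k$ then $b \le c - a \le M+1 - a < k+1$. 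Everything else is routine polynomial expansion, and I would present the finite-difference lemma as a one-line justification citing the standard fact that the $(k+1)$-st forward difference annihilates polynomials of degree $\le k$.
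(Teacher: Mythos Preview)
Your proof is correct and takes essentially the same approach as the paper's: factor out the Pochhammer prefactor $\prod_{j=0}^M (n-k+\nu_j+1)_k = n^{k(M+1)}+O(n^{k(M+1)-1})$ and show the remaining sum is $\binom{M+1}{k+1}\,n^{M-k}+O(n^{M-k-1})$ by expanding $\prod_{i=0}^M(n-k+j+\nu_i)$ in powers of $n$ and noting that the coefficient of $n^{l}$ has degree $M+1-l$ in the summation variable, so the terms with $l>M-k$ are annihilated. The only difference is presentational: the paper encodes the sum via the contour integral \eqref{aknintegral} and kills the high-$n$ contributions by the vanishing of $\oint_\Sigma p(t)/(t-k-1)_{k+2}\,\ud t$ for $\deg p\le k$ (residue at infinity), whereas you use the $(k{+}1)$st forward-difference identity; since that contour integral equals the sum of residues at $t=0,1,\ldots,k+1$, which is exactly your alternating binomial sum, the two arguments are the same computation in different notation.
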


\begin{proof}
From \eqref{akandbk} and the contour integral representation
\eqref{bkncontour}  for $b_{k,n}$ we find
\begin{equation} \label{aknintegral}
    a_{k,n} = \left( \prod_{j=0}^M (n-k+\nu_j+1)_k  \right)
    \frac{1}{2\pi i} \oint_{\Sigma} \frac{g_n(t)}{(t-k-1)_{k+2}}
    \ud t,
    \end{equation}
where
\[ g_n(t) = \prod_{j=0}^M (t+n-k+\nu_j) = \sum_{l=0}^{M+1} p_l(t) n^l \]
is a polynomial of degree $M+1$ in $n$. The coefficient $p_l(t)$ is
a polynomial in $t$ of degree $\deg p_l(t) = M+1-l$. Thus
\[
    \frac{1}{2\pi i} \oint_{\Sigma} \frac{g_n(t)}{(t-k-1)_{k+2}}
    \ud t  =
    \sum_{l=0}^{M+1} \left( \frac{1}{2\pi i} \oint_{\Sigma}
        \frac{p_l(t)}{(t-k-1)_{k+2}} \ud t \right)   n^l. \]
The integral vanishes if $p_l$ is a polynomial of degree $\leq k$
since in that case the integrand is $O(t^{-2})$, and we can move the
contour to infinity. This happens for $l \geq M-k+1$. For $l = M-k$,
we have
\[ p_{M-k}(t) = \binom{M+1}{k+1} t^{k+1}  + O(t^k), \qquad \text{ as } t \to \infty, \]
and by a residue calculation at infinity we obtain
\[ \frac{1}{2\pi i} \oint_{\Sigma} \frac{p_{M-k}(t)}{(t-k-1)_{k+2}} \ud t = \binom{M+1}{k+1}. \]
Thus the second factor in the right-hand side of \eqref{aknintegral}
is a polynomial of degree $M-k$ in $n$ with leading coefficient
$\binom{M+1}{k+1}$.

The other factor is a monic polynomial in $n$ of degree $k(M+1)$.
Thus $a_{k,n}$ has degree $k(M+1) + M-k = (k+1)M$ with leading
coefficient $\binom{M+1}{k+1}$, as claimed in the lemma.
\end{proof}

Let's finally write down \eqref{aknintegral} for small values of
$M$.

\paragraph{Case $M=1$} For $M=1$ we have a three term recurrence
\[ xP_n(x) = P_{n+1}(x) + a_{0,n}P_n(x) + a_{1,n}P_{n-1}(x) \]
with
\[ a_{0,n} = 2n + \nu_1 + 1, \qquad a_{1,n} = n(n+\nu_1). \]
This is the recurrence relation for monic Laguerre polynomials with
parameter $\nu_1$.

\paragraph{Case $M=2$} For $M=2$ we have a four term recurrence
\[ xP_n(x) = P_{n+1}(x) + a_{0,n}P_n(x) + a_{1,n}P_{n-1}(x) + a_{2,n} P_{n-2}(x) \]
with
\begin{align*}
    a_{0,n} & = 3n^2+(3+ 2 \nu_1 + 2 \nu_2) n + (1 + \nu_1 + \nu_2 + \nu_1 \nu_2), \\
  a_{1,n} & = n(n+\nu_1)(n+\nu_2)(3n + \nu_1 +\nu_2), \\
    a_{2,n} & = n(n-1) (n+\nu_1)(n+\nu_1-1)(n+\nu_2)(n+\nu_2-1).
    \end{align*}
This agrees with the recurrence coefficients given in \cite[Theorem
4]{VAYa} if we use $\alpha = \nu_2$, $\nu = \nu_1 - \nu_2$.

\section{Double integral representation and large $n$ limit of $K_n$}
\label{sec:study of Kn}

In this section, we are concerned with the correlation kernel
$K_n(x,y)$ defined in \eqref{def:Kn}.

\subsection{Double integral formula for $K_n$}

The correlation kernel admits a double contour integral representation.
\begin{proposition} \label{prop:Knintegral} We have
\begin{equation} \label{Knintegral}
    K_n(x,y) =  \frac{1}{(2\pi i)^2} \int_{-1/2-i\infty}^{-1/2+i\infty} \ud s \oint_{\Sigma}  \ud t
        \prod_{j=0}^M   \frac{\Gamma(s+\nu_j+1)}{\Gamma(t+\nu_j+ 1)}
            \frac{\Gamma(t-n+1)}{\Gamma(s-n+1)}
        \frac{x^t y^{-s-1}}{s-t},
\end{equation}
where $\Sigma$ is a closed contour going around $0, 1, \ldots, n$ in
the positive direction and  $\Re t > -1/2$ for $t \in \Sigma$.
\end{proposition}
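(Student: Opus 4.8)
The plan is to start from the finite-sum representation
\eqref{Kn}, namely $K_n(x,y) = \sum_{j,k=0}^{n-1} x^j (M_n^{-1})_{k,j} w_k(y)$,
and to convert each of the two "variable" factors into a contour integral.
For the $y$-dependence I would not use $w_k$ directly but rather the biorthogonal
function representation \eqref{def:Kn}, $K_n(x,y)=\sum_{k=0}^{n-1}P_k(x)Q_k(y)$,
together with the explicit integral formulas already established:
$P_k$ from \eqref{Pnintegral} as a $t$-contour integral around $0,1,\dots,k$,
and $Q_k$ from \eqref{Qkintegral} as an $s$-line integral along
$\Re s = c$. Substituting both gives a double integral of the product of the two
integrands, summed over $k$ from $0$ to $n-1$. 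The sum over $k$ is the crucial
point: after pulling it inside, one has to evaluate
$\sum_{k=0}^{n-1} \frac{\Gamma(n+\nu_j+1)\cdots}{\cdots}$ type coefficients, and
the $k$-dependence collapses to a geometric-like sum whose closed form produces the
Cauchy-type kernel $\frac{1}{s-t}$ in \eqref{Knintegral}. Concretely, the
$P_k(x)$ integrand carries a factor $\frac{\prod_j\Gamma(k+\nu_j+1)}{\prod_j\Gamma(t+\nu_j+1)}\frac{\Gamma(t-k)}{\cdots}x^t$
and the $Q_k(y)$ integrand carries $\frac{1}{\prod_j\Gamma(k+\nu_j+1)}\frac{\prod_j\Gamma(s+\nu_j)}{\Gamma(s-k)}y^{-s}$,
so the troublesome $\prod_j\Gamma(k+\nu_j+1)$ factors cancel, leaving a sum over $k$
of $\frac{\Gamma(t-k)}{\Gamma(s-k)}$-type terms times $k$-independent factors.

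Next I would carry out that sum in closed form. Writing the ratio of Gamma
functions using the functional equation, $\frac{\Gamma(t-k)}{\Gamma(s-k)}$ telescopes:
one verifies the identity
$\sum_{k=0}^{n-1} \frac{\Gamma(t+1-k)\Gamma(s-k)^{-1}}{\cdots}$-type sum equals a
single term of the form
$\frac{1}{s-t}\left(\frac{\Gamma(t-n+1)}{\Gamma(s-n+1)}\cdot\frac{\Gamma(s+1)}{\Gamma(t+1)} - 1\right)$
(a discrete analogue of partial fractions / Abel summation). The "$-1$" piece, when
integrated, should vanish because it has no pole in $t$ inside $\Sigma$ (it is, up to
$x^t$, an entire function of $t$ outside the explicitly cancelled region), so only the
$\frac{1}{s-t}$ term with the ratio $\Gamma(t-n+1)/\Gamma(s-n+1)$ survives. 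After
shifting the $s$-contour from $\Re s = c$ to $\Re s = -1/2$ — which is permitted since
the integrand decays and one picks up no residues between the two lines provided
$\nu_j>-1$ — and adjusting $s\mapsto s$ versus $y^{-s}$ versus $y^{-s-1}$ by the change
of variable implicit in \eqref{Qkintegral} (the shift by $1$ in the exponent matching
the $y^{-s-1}$ in \eqref{Knintegral}), one arrives exactly at \eqref{Knintegral}. I
would also need to check that the contour $\Sigma$ for $t$ can be taken to satisfy
$\Re t > -1/2$ so that it does not collide with the $s$-line, which is consistent with
$\Sigma$ merely needing to enclose $0,1,\dots,n$.

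The main obstacle is justifying the interchange of the (finite) sum with the two
contour integrals and, more delicately, the vanishing of the spurious "$-1$"
contribution and the contour deformation in $s$. For the interchange: the sum is
finite, so the only issue is absolute convergence of each integral, which follows from
the exponential decay of $1/\Gamma$ along vertical lines (Stirling) and the fact that
the $t$-contour is compact. For the vanishing term: I would argue that the part of the
integrand not containing $\frac{1}{s-t}$ is, as a function of $t$, a ratio of Gamma
functions times $x^t$ whose poles at $t=0,1,\dots,n-1$ are precisely cancelled — more
carefully, one checks that after the summation the residual term is holomorphic inside
$\Sigma$ except possibly at $t=n$, and a separate computation shows that residue
integrates to zero against the $s$-integral, or alternatively that it can be pushed to
$t\to\infty$. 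For the $s$-deformation: between $\Re s = c$ (large) and $\Re s=-1/2$ the
integrand $\prod_j\Gamma(s+\nu_j+1)/\Gamma(s-n+1)$ has poles only at $s=-1-\nu_j$, all
of which lie to the left of $\Re s=-1/2$ when $\nu_j>-1$, so no residues are crossed,
and Stirling again controls the horizontal tails. Once these three technical points are
in place, the identity \eqref{Knintegral} drops out by matching integrands. A cleaner
alternative for the summation step, which I would try first, is to use the
Christoffel–Darboux-type telescoping directly at the level of the Mellin integrands:
write $P_k(x)Q_k(y)$ as a double integral and recognize
$\sum_k [\text{integrand}]_k$ as a telescoping difference
$A_{k} - A_{k-1}$ with $A_k$ involving $\Gamma(t-k)/\Gamma(s-k)$, so that
$\sum_{k=0}^{n-1}(A_k-A_{k-1}) = A_{n-1}-A_{-1}$ and the $A_{-1}$ boundary term
integrates to zero for the same pole-counting reason.
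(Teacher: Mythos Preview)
Your proposal is correct and follows essentially the same route as the paper: insert the integral representations \eqref{Pnintegral} and \eqref{Qkintegral} into $K_n=\sum_k P_kQ_k$, observe that the normalization constants $\prod_j\Gamma(k+\nu_j+1)$ cancel, telescope the sum $\sum_{k=0}^{n-1}\Gamma(t-k)/\Gamma(s-k)$ via the identity $(s-t-1)\frac{\Gamma(t-k)}{\Gamma(s-k)}=\frac{\Gamma(t-k)}{\Gamma(s-k-1)}-\frac{\Gamma(t-k+1)}{\Gamma(s-k)}$, and kill the boundary term $\Gamma(t+1)/\Gamma(s)$ by Cauchy's theorem since it is entire in $t$ inside $\Sigma$. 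Two minor simplifications relative to your sketch: the paper avoids a separate $s$-contour deformation by simply taking $c=1/2$ from the outset and then substituting $s\mapsto s+1$ at the end (which is why the telescoping factor is $s-t-1$ rather than $s-t$); and there is no residual pole at $t=n$ to worry about in the vanishing term, since $\Gamma(t+1)$ is holomorphic for $\Re t>-1$.
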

\begin{proof}
The correlation kernel \eqref{def:Kn} can be written  as a double
integral
\begin{equation} \label{Knintegral0}
    K_n(x,y) = \frac{1}{(2\pi i)^2} \int_{c-i\infty}^{c+i\infty} \ud s \oint_{\Sigma}  \ud t
    \prod_{j=0}^M   \frac{\Gamma(s+\nu_j)}{\Gamma(t+\nu_j+ 1)}
\sum_{k=0}^{n-1} \frac{\Gamma(t-k)}{\Gamma(s-k)} x^t y^{-s},
    \end{equation}
where we used the integral representation  \eqref{Pnintegral} for $P_k$
and  \eqref{Qkintegral} for $Q_k$.
From the functional equation $\Gamma(z+1)=z\Gamma(z)$, one can
easily check that
\[ (s-t-1) \frac{\Gamma(t-k)}{\Gamma(s-k)} =
\frac{\Gamma(t-k)}{\Gamma(s-k-1)} -
\frac{\Gamma(t-k+1)}{\Gamma(s-k)}, \] which means that there is a
telescoping sum
\begin{equation} \label{Gammatelescope}
    (s-t-1) \sum_{k=0}^{n-1} \frac{\Gamma(t-k)}{\Gamma(s-k)}
    = \frac{\Gamma(t-n+1)}{\Gamma(s-n)} -\frac{\Gamma(t+1)}{\Gamma(s)}.
        \end{equation}

We are going to make sure that $s-t-1 \neq 0$ when $s \in c +
i\mathbb R$ and $t \in \Sigma$. We do this by taking $c = 1/2$ and
let $\Sigma$ go around $0, 1, \ldots, n$ but with $\Re t > -1/2$ for
$t \in \Sigma$. Then we insert \eqref{Gammatelescope} into
\eqref{Knintegral0} and get
\begin{align*}
    K_n(x,y) =&  \frac{1}{(2\pi i)^2} \int_{1/2-i\infty}^{1/2+i\infty} \ud s \oint_{\Sigma}  \ud t
    \prod_{j=0}^M   \frac{\Gamma(s+\nu_j)}{\Gamma(t+\nu_j+ 1)}
            \frac{\Gamma(t-n+1)}{\Gamma(s-n)}       \frac{x^t y^{-s}}{s-t-1}
\nonumber \\ &- \frac{1}{(2\pi i)^2}
\int_{1/2-i\infty}^{1/2+i\infty} \ud s \oint_{\Sigma}  \ud t
    \prod_{j=0}^M   \frac{\Gamma(s+\nu_j)}{\Gamma(t+\nu_j+ 1)}
            \frac{x^t y^{-s}}{s-t-1}.
    \end{align*}
The $t$-integral in the second double integral vanishes by Cauchy's
theorem, since the integrand does not have any singularities inside
$\Sigma$. We change $s \mapsto s+1$ in the first double integral and
we obtain \eqref{Knintegral}.
\end{proof}

We can rewrite the kernel in terms of Meijer G-functions
\begin{corollary}
We have
\begin{align} \nonumber
K_n(x,y) & =   \int_0^1  G^{0,1}_{1,M+1}
\left(\begin{array}{c} n \\ -\nu_0, \ldots, -\nu_M
\end{array} \Big{|} ux\right)
    G^{M,1}_{M+1,0} \left( \begin{array}{c} -n \\  \nu_0, \ldots, \nu_M \end{array}
    \Big{|} uy \right) \ud u
\\
    & = - \prod_{j=1}^M (n+\nu_j) \int_0^1 P_{n-1}(ux) Q_n(uy) \ud u . \label{KnMeijerG}
\end{align}
\end{corollary}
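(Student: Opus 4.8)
The plan is to derive \eqref{KnMeijerG} from the double contour integral \eqref{Knintegral} of Proposition~\ref{prop:Knintegral}. First observe that the two displayed lines are equivalent: by \eqref{PnMeijerG} with $n$ replaced by $n-1$ one has $G^{0,1}_{1,M+1}\!\left({n\atop -\nu_0,\dots,-\nu_M}\mid z\right)=-P_{n-1}(z)\big/\prod_{j=0}^M\Gamma(n+\nu_j)$, and by \eqref{QkMeijerG} with $k=n$ one has $G^{M+1,0}_{1,M+1}\!\left({-n\atop \nu_0,\dots,\nu_M}\mid z\right)=\big(\prod_{j=0}^M\Gamma(n+\nu_j+1)\big)Q_n(z)$; multiplying these and using $\prod_{j=0}^M\Gamma(n+\nu_j+1)\big/\prod_{j=0}^M\Gamma(n+\nu_j)=\prod_{j=0}^M(n+\nu_j)$ turns the Meijer G form into the $P_{n-1},Q_n$ form. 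Hence it suffices to prove one of the two, and in fact both will come out of the same computation.

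The main device is the elementary identity $1/a=\int_0^1 u^{a-1}\,\ud u$, valid for $\Re a>0$. On the contours of \eqref{Knintegral} we have $\Re t>-1/2=\Re s$, hence $\Re(t-s)>0$, so
\[
    \frac{1}{s-t}=-\int_0^1 u^{\,t-s-1}\,\ud u .
\]
Inserting this into \eqref{Knintegral}, interchanging the $u$-integral with the $s$- and $t$-integrals (justified below), and regrouping $x^t u^t=(ux)^t$ and $y^{-s-1}u^{-s-1}=(uy)^{-s-1}$, one obtains
\begin{multline*}
    K_n(x,y)=-\int_0^1\left(\frac{1}{2\pi i}\oint_{\Sigma}
        \frac{\Gamma(t-n+1)}{\prod_{j=0}^M\Gamma(t+\nu_j+1)}\,(ux)^t\,\ud t\right)\\
        \times\left(\frac{1}{2\pi i}\int_{-1/2-i\infty}^{-1/2+i\infty}
        \frac{\prod_{j=0}^M\Gamma(s+\nu_j+1)}{\Gamma(s-n+1)}\,(uy)^{-s-1}\,\ud s\right)\ud u .
\end{multline*}
The interchange is legitimate by Fubini: the triple integrand factorises into a $t$-part, an $s$-part and $u^{\,t-s-1}$; the inner $u$-integral of the modulus equals $\int_0^1 u^{\Re t-1/2}\,\ud u<\infty$ since $\Re t>-1/2$ on $\Sigma$; the $s$-integral of the modulus is finite because, by Stirling, the Gamma quotient decays like $|\Im s|^{\,n+\sum_j\nu_j}e^{-M\pi|\Im s|/2}$ along $\Re s=-1/2$ (here $M\ge1$ is used); and $\Sigma$ is compact.

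It remains to recognise the two inner integrals. The $t$-integrand is meromorphic with poles only at $t=0,1,\dots,n-1$, the poles of $\Gamma(t-n+1)$ at the negative integers being cancelled by the factor $\Gamma(t+\nu_0+1)=\Gamma(t+1)$ in the denominator; hence $\Sigma$ can be contracted to a contour encircling only $0,\dots,n-1$, and \eqref{Pnintegral} with $n$ replaced by $n-1$ shows that the $t$-integral equals $P_{n-1}(ux)\big/\prod_{j=0}^M\Gamma(n+\nu_j)=-G^{0,1}_{1,M+1}\!\left({n\atop -\nu_0,\dots,-\nu_M}\mid ux\right)$ (using \eqref{PnMeijerG}). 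For the $s$-integral the substitution $s\mapsto s+1$ moves the contour to $\Re s=1/2$ and turns it into $\frac{1}{2\pi i}\int_{1/2-i\infty}^{1/2+i\infty}\frac{\prod_{j=0}^M\Gamma(s+\nu_j)}{\Gamma(s-n)}\,(uy)^{-s}\,\ud s$, which by \eqref{Qkintegral} with $k=n$ equals $\big(\prod_{j=0}^M\Gamma(n+\nu_j+1)\big)Q_n(uy)=G^{M+1,0}_{1,M+1}\!\left({-n\atop \nu_0,\dots,\nu_M}\mid uy\right)$ (using \eqref{QkMeijerG}). Substituting these back, the two minus signs cancel and the Gamma quotient collapses to $\prod_{j=0}^M(n+\nu_j)$, which gives \eqref{KnMeijerG}.

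The step I expect to be the main obstacle is the rigorous interchange of integration. It rests exactly on the two conditions in the statement of Proposition~\ref{prop:Knintegral}: keeping $\Re s=-1/2$ ensures both that $\Re(t-s)>0$, so the identity for $1/(s-t)$ applies, and that the Gamma quotient in $s$ decays exponentially; keeping $\Re t>-1/2$ on $\Sigma$ ensures that $u^{\,t-s-1}$ is integrable at the endpoint $u=0$. Everything else is routine tracking of the multiplicative constants.
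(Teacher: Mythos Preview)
Your proof is correct and follows essentially the same route as the paper's: both insert the identity $\frac{1}{s-t}=-\int_0^1 u^{\,t-s-1}\,\ud u$ (valid since $\Re t>-1/2=\Re s$) into \eqref{Knintegral} and then recognise the resulting factored $t$- and $s$-integrals. You add a careful Fubini justification and identify the factors through \eqref{Pnintegral} and \eqref{Qkintegral}, whereas the paper goes directly via the Meijer~G definition \eqref{def:Meijer} after the substitutions $t\mapsto -t$, $s\mapsto s+1$; these are equivalent presentations of the same argument.
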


\begin{proof}
Note that
\begin{equation} \label{u-integral}
    \frac{x^t y^{-s-1}}{s-t} = - \int_0^1 (ux)^t (uy)^{-s-1} \ud u.
    \end{equation}
The kernel \eqref{Knintegral} then is
\begin{align}
        K_n(x,y) =& - \int_0^1  \left( \frac{1}{2\pi i}
        \oint_{\Sigma} \frac{\Gamma(t-n+1)}{\prod_{j=0}^M \Gamma(t+\nu_j+1)} (ux)^t \ud t \right) \nonumber \\
        & \times
    \left( \frac{1}{2\pi i} \int_{-1/2 - i \infty}^{-1/2+i \infty}
        \frac{\prod_{j=0}^M \Gamma(s+\nu_j+1)}{\Gamma(s-n+1)} (uy)^{-s-1} \ud s \right) \ud u.
        \end{align}
By the definition \eqref{def:Meijer} and change of variables $
t\mapsto -t$, $s \mapsto s+1$, both factors in the $u$ integral can
be identified as Meijer G-functions and the first identity in
\eqref{KnMeijerG} follows.

The second identity in \eqref{KnMeijerG} follows from \eqref{QkMeijerG} and \eqref{PnMeijerG}.
\end{proof}

\subsection{Microscopic limit of $K_n$ at the hard edge}
With the help of the contour integral representation \eqref{Knintegral} for $K_n$, we
 derive its scaling limit near the origin (hard edge). The limiting kernels are
denoted by $K^M_{\nu}$, where $\nu$ stands for the collection of
parameters $\nu_1, \ldots, \nu_M$.

\begin{theorem}\label{thm:local limit}
With $\nu_1, \ldots, \nu_M$ being fixed, we have
\[ \lim_{n \to \infty}  \frac{1}{n} K_n \left(\frac{x}{n}, \frac{y}{n} \right) = K^M_{\nu}(x,y), \]
uniformly for $x,y$ in compact subsets of the positive real axis, where
\begin{align}
&K^M_{\nu} (x,y) \nonumber
\\
&=\frac{1}{(2\pi i)^2}
    \int_{-1/2-i\infty}^{-1/2+i\infty} \ud s \int_{\Sigma}  \ud t     \prod_{j=0}^M   \frac{\Gamma(s+\nu_j+1)}{\Gamma(t+\nu_j+ 1)}
        \frac{\sin \pi s}{\sin \pi t} \frac{x^t y^{-s-1}}{s-t}
  \label{def:K(x,y;m)} \\
&=\int_0^1  G^{1,0}_{0,M+1} \left(
\begin{array}{c} - \\-\nu_0, -\nu_1, \ldots, -\nu_M
\end{array}\Big{|} ux \right)
\nonumber\\
& \qquad \qquad \qquad \times
    G^{M,0}_{0,M+1} \left( \begin{array}{c} - \\  \nu_1, \ldots, \nu_M,\nu_0
\end{array}\Big{|}uy
    \right) \ud u, \nonumber
\end{align}
and where $\Sigma$ is a contour starting from $+\infty$ in the upper
half plane and returning to $+\infty$ in the lower half plane which
encircles the positive real axis and $\Re t>-1/2$ for $t\in\Sigma$;
see Figure \ref{fig:contour} for an illustration.
\end{theorem}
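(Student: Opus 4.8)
The plan is to start from the exact double contour integral \eqref{Knintegral} for $K_n$, perform the scaling $x\mapsto x/n$, $y\mapsto y/n$, multiply by $1/n$, and then take $n\to\infty$ under the integral sign. First I would substitute $x\to x/n$ and $y\to y/n$ in \eqref{Knintegral}; this produces a factor $n^{-t}n^{s+1}=n^{s-t+1}$ from the powers $x^t y^{-s-1}$, and the prefactor $1/n$ cancels one power of $n$, leaving $n^{s-t}$ together with the $n$-dependent Gamma ratio $\Gamma(t-n+1)/\Gamma(s-n+1)$. The key asymptotic input is the classical ratio estimate
\[
    \frac{\Gamma(t-n+1)}{\Gamma(s-n+1)} = \frac{(-1)^{t-s}\,\Gamma(n-s)}{\Gamma(n-t)}
    \sim (-1)^{t-s} n^{s-t} \qquad (n\to\infty),
\]
obtained by the reflection formula $\Gamma(z)\Gamma(1-z)=\pi/\sin\pi z$ applied to both numerator and denominator and then Stirling's ratio $\Gamma(n-s)/\Gamma(n-t)\sim n^{t-s}$. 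The factor $n^{s-t}$ here cancels exactly the $n^{s-t}$ coming from the scaled power functions, and the sign $(-1)^{t-s}=e^{i\pi(t-s)}$ combines with the reflection formula to give precisely $\sin\pi s/\sin\pi t$; this is how \eqref{def:K(x,y;m)} arises. The remaining factors $\prod_{j=0}^M \Gamma(s+\nu_j+1)/\Gamma(t+\nu_j+1)$ and $x^t y^{-s-1}/(s-t)$ are $n$-independent and pass to the limit untouched.

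Second, I would justify the interchange of limit and integration. On the $s$-line $\Re s=-1/2$ the factor $\prod_j\Gamma(s+\nu_j+1)$ decays like $e^{-\frac{\pi}{2}(M+1)|\Im s|}$ by Stirling, which dominates any polynomial growth, so the $s$-integral converges absolutely and uniformly in $n$ once the $\Gamma$-ratio is controlled; for the latter I would note that $|\Gamma(n-s)/\Gamma(n-t)|$ is uniformly comparable to $n^{\Re(t-s)}$ for $n$ large with an error $O(1/n)$ that is uniform for $s$ on the line and $t$ on the fixed compact contour $\Sigma$ (this uses that $\Re t>-1/2$ keeps $n-t$ and $n-s$ away from the poles of $\Gamma$). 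The contour $\Sigma$ must, however, be deformed: in \eqref{Knintegral} it encircles $0,1,\dots,n$, and as $n\to\infty$ it would have to grow; instead, since after extracting the $\sin\pi t$ factor the integrand $1/\Gamma(t+\nu_j+1)$ no longer has poles at the positive integers (the poles of $1/\Gamma$ at nonpositive integers are elsewhere, and $\sin\pi t$ is entire), the natural limiting contour is the Hankel-type loop described in the statement, starting and returning at $+\infty$ around the positive real axis with $\Re t>-1/2$. I would argue that for each fixed $n$ one may first deform the finite contour $\Sigma_n$ of \eqref{Knintegral} onto this fixed loop $\Sigma$ (the only obstruction would be poles of $\Gamma(t-n+1)$ at $t=n-1,n-2,\dots$, but these lie to the right and are not crossed if $\Sigma$ stays in a bounded region, provided one checks the integrand decays as $\Re t\to+\infty$ along the loop — which it does, because $1/\Gamma(t+\nu_j+1)$ decays super-exponentially and $\sin\pi t$ grows only exponentially, while $x^t$ with $x$ in a compact set contributes at most exponential growth, and the product $(M+1)$ copies of $1/\Gamma$ wins). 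The uniformity in $x,y$ on compacts then follows because all $x,y$-dependence sits in $x^t y^{-s-1}$, which is uniformly bounded and uniformly continuous on the relevant contours.

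Third, I would derive the second (Meijer G) expression for $K^M_\nu$ from the double integral, exactly paralleling the finite-$n$ computation: using $x^t y^{-s-1}/(s-t) = -\int_0^1 (ux)^t (uy)^{-s-1}\ud u$ as in \eqref{u-integral}, the $t$-integral becomes $\frac{1}{2\pi i}\int_\Sigma \frac{\sin\pi t}{\prod_j\Gamma(t+\nu_j+1)}(ux)^t\ud t$ and the $s$-integral becomes $\frac{1}{2\pi i}\int_{-1/2-i\infty}^{-1/2+i\infty}\frac{\prod_j\Gamma(s+\nu_j+1)}{\sin\pi s}(uy)^{-s-1}\ud s$. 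Applying the reflection formula $\sin\pi t/\pi = 1/(\Gamma(t)\Gamma(1-t))$, rewriting $1/\sin\pi s$ similarly, changing $t\mapsto -t$ and $s\mapsto s+1$, and comparing with the Mellin–Barnes definition \eqref{def:Meijer} of the Meijer G-function identifies the first factor as $G^{1,0}_{0,M+1}$ and the second as $G^{M,0}_{0,M+1}$ with the stated parameters; I would defer the bookkeeping of which $\Gamma$-factors go "up" versus "down" to a short explicit check. I expect the main obstacle to be the rigorous justification of the contour deformation and the uniform domination of the $\Gamma$-ratio $\Gamma(t-n+1)/\Gamma(s-n+1)$ simultaneously — i.e., producing a single $n$-independent integrable majorant on (a fixed choice of) $\Sigma\times\{\Re s=-1/2\}$ — since the naive contour in \eqref{Knintegral} depends on $n$; everything else is a routine application of Stirling's formula, the reflection formula, and dominated convergence.
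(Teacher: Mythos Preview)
Your proposal is correct and follows essentially the same route as the paper: apply the reflection formula to rewrite $\Gamma(t-n+1)/\Gamma(s-n+1)$ as $(\sin\pi s/\sin\pi t)\,\Gamma(n-s)/\Gamma(n-t)$, use the Stirling ratio $\Gamma(n-s)/\Gamma(n-t)=n^{t-s}(1+O(1/n))$ to cancel the $n^{s-t}$ from the rescaled powers, deform the closed $\Sigma_n$ to the fixed Hankel loop, invoke dominated convergence, and then recover the Meijer~G form via the $u$-integral \eqref{u-integral}. One small point: your intermediate identity with the factor $(-1)^{t-s}$ is not literally correct for complex $t-s$ --- the paper obtains $\sin\pi s/\sin\pi t$ directly and exactly from the reflection formula (the $(-1)^{n+1}$ factors cancel in the ratio), which is cleaner than routing through an $e^{i\pi(t-s)}$ heuristic.
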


\begin{figure}[t]
\centering
\begin{overpic}[scale=0.8]{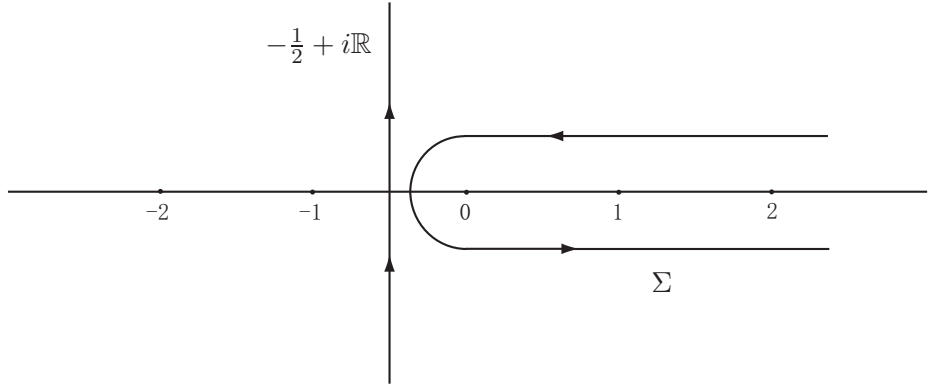}
\put(70,10){$\Sigma$}
\put(28,36){$-\frac{1}{2} + i \mathbb R$}
\end{overpic}
\caption{The two contours of the double integral in \eqref{def:K(x,y;m)}.}
\label{fig:contour}
\end{figure}

\begin{proof}
The reflection formula of the Gamma function says that
\begin{equation}\label{eq:reflection}
\Gamma(t) \Gamma(1-t) = \frac{\pi}{\sin \pi t},
\end{equation}
which means that
\begin{equation}\label{eq:n to -n}
\frac{\Gamma(t-n+1)}{\Gamma(s-n+1)} =\frac{\Gamma(n-s)}{\Gamma(n-t)}
\frac{\sin \pi s}{\sin \pi t}.
\end{equation}

As $n \to \infty$, we have the following ratio asymptotics of Gamma
functions (cf. \cite[formula 5.11.13]{DLMF})
\begin{equation}\label{eq:ratio asy}
\frac{\Gamma(n-s)}{\Gamma(n-t)} = n^{t-s} \left(1+ O(n^{-1})\right),
\end{equation}
which can be easily verified using Stirling's formula.
By modifying the contour $\Sigma$ in \eqref{Knintegral} from a
closed contour around $0, 1, \ldots, n$ to a two sided unbounded
contour as in Figure~\ref{fig:contour} and applying
\eqref{eq:n to -n} and \eqref{eq:ratio asy}, we readily obtain
the first identity in \eqref{def:K(x,y;m)},
provided that we can take the limit inside of the integral.

The $t$-integral in \eqref{def:K(x,y;m)} converges since $\Gamma(t+\nu_j+1)$
increases if we go to infinity along $\Sigma$ and
\[ |\sin \pi t | \geq |\sinh \pi \Im t |. \]
Also the $s$ integral converges since
\[|\mathop{\Gamma\/}\nolimits\!\left(x+iy\right)|\sim\sqrt{2\pi}|y|^{{x-(1/2)}}e%
^{{-\pi|y|/2}},\] as $y\to \pm \infty$ for bounded real value of
$x$; see \cite[formula 5.11.9]{DLMF}. Therefore, $\Gamma(s+\nu_j+1)$
tends to $0$ at an exponential rate if $|s| \to \infty$ with $\Re s
= -1/2$. We can then indeed justify the interchange of limit and
integrals for every $M$ by the dominated convergence theorem.

By \eqref{eq:reflection}, we see \[\frac{\sin \pi s}{\sin \pi t} =
\frac{\Gamma(1+t)\Gamma(-t)}{\Gamma(1+s) \Gamma(-s)},\] and using
the trick \eqref{u-integral} as in the proof of Proposition
\ref{prop:Knintegral}, we obtain
\begin{multline*}
    K_\nu^M(x,y) = - \int_0^1 \left( \frac{1}{2\pi i} \int_{\Sigma} \frac{\Gamma(-t)}{\prod_{j=1}^M \Gamma(t+\nu_j+1)} (ux)^t \ud t \right) \\
    \times \left( \frac{1}{2\pi i} \int_{-1/2 - i \infty}^{-1/2 + i \infty} \frac{\prod_{j=1}^M \Gamma(s + \nu_j+1)}{\Gamma(-s)} (uy)^{-s-1} \ud s \right)
    \ud u.
\end{multline*}
The change of variables $t \mapsto -t$ and $s \mapsto s+1$ takes both integrals into the form \eqref{def:Meijer}
of a Meijer G-function, and the second identity in  \eqref{def:K(x,y;m)} follows.
\end{proof}

It is known that the limiting mean distribution of the squared
singular values for the products of $M$ Ginibre matrices blows up
with a rate $x^{-M/(M+1)}$ near the origin (see \cite{BJLNS,PZ}).
Extending the notion of universality at the hard edge, we are led to
the expectation that the kernels described in Theorem \ref{thm:local
limit} should appear in more general situations of the products of
independent complex random matrices, and possibly in other models of
random matrix theory.

\subsection{Special case $M=1$}
Let's now take a closer look at the limiting kernels
$K_{\nu}^M(x,y)$ for special values of $M$. If $M=1$ and $\nu_1 =
\nu$, one has since $\nu_0 = 0$ (we drop the superscript $M=1$)
\[
K_{\nu} (x,y)= \int_0^1 G^{1,0}_{0,2} \left(   \begin{array}{c} - \\0,
-\nu
\end{array}
    \Big{|} ux \right)
G^{1,0}_{0,2} \left(  \begin{array}{c} - \\  \nu, 0
\end{array}
    \Big{|} uy \right) \ud u.
\]
Since
\begin{align*}
G^{1,0}_{0,2} \left(   \begin{array}{c} - \\0, -\nu
\end{array} \Big{|} ux\right)&=(ux)^{-\nu/2}J_{\nu}(2\sqrt{ux}), \\
G^{1,0}_{0,2} \left(   \begin{array}{c} - \\\nu, 0
\end{array} \Big{|} uy \right)&=(uy)^{\nu/2}J_{\nu}(2\sqrt{uy}),
\end{align*}
where $J_\nu$ denotes the Bessel function of the first kind of order
$\nu$ (see \cite[formula 10.9.23]{DLMF}), it then follows that
\begin{align*}
K_\nu(x,y) & =\left(\frac{y}{x}\right)^{\nu/2}\int_0^1
J_{\nu}(2\sqrt{ux})J_{\nu}(2\sqrt{uy}) \ud u \\
    & = 4   \left(\frac{y}{x}\right)^{\nu/2} K^{\rm Bes,\nu}(4x,4y),
    \end{align*}
where
\[ K^{\rm Bes,\nu}(x,y) = \frac{J_\nu(\sqrt x)\sqrt y J'_\nu
(\sqrt y)-\sqrt x J_\nu'(\sqrt x)J_\nu(\sqrt y)}{2(x-y)},\quad
\nu>-1,
\]
is the Bessel kernel of order $\nu$ that appears as the scaling limit of the Laguerre
or Jacobi unitary ensembles at the hard edge \cite{TW94}, as
expected.

\subsection{Special case $M=2$}
If $M=2$, one has from \eqref{def:K(x,y;m)} that (we  drop the superscript $M=2$)
\begin{equation}\label{eq:K m=2}
K_{\nu_1,\nu_2}(x,y)= \int_0^1 G^{1,0}_{0,3} \left(  \begin{array}{c} - \\0,
-\nu_1,-\nu_2
\end{array}\Big{|} ux
\right) G^{2,0}_{0,3} \left( \begin{array}{c} - \\  \nu_1, \nu_2, 0
\end{array}
    \Big{|} uy \right) \ud u.
\end{equation}
It is interesting that these kernels appeared earlier in another random
matrix model, namely in the Cauchy two-matrix
model with linear potentials, see \cite{BeGeSz09,BeGeSz}.

The Cauchy two matrix model is defined by the probability measure
\begin{equation*}
\frac{1}{\mathcal {Z}_n}\frac{\det (M_1)^a \det (M_2)^b
e^{-\textrm{Tr}\left(V_1(M_1)+V_2(M_2)\right)}}{\det (M_1+M_2)^n}\ud
M_1 \ud M_2, \quad a,b>-1, \, a + b > -1,
\end{equation*}
defined on the space of two $n\times n$ positive semidefinite Hermitian
matrices $M_1$ and $M_2$, with two scalar potentials $V_1, V_2$
defined on the positive real axis that grow sufficiently
fast as $x\to +\infty$.

The eigenvalues of $M_1$ and $M_2$ form a determinantal point process with a
correlation kernel which is
defined in terms of the Cauchy biorthogonal polynomials
\cite{BeGeSz10} $p_l(x)$ and $q_m(y)$ satisfying
\begin{equation*}
\int_0^{\infty}\int_0^{\infty}\frac{x^ay^b
e^{-V_1(x)-V_2(y)}}{x+y}p_l(x)q_m(y)\ud x \ud y=\delta_{l,m}.
\end{equation*}

For the linear case $V_1(x)=x$ and $V_2(y)=y$, it was established in
\cite[Theorem 2.2]{BeGeSz} that the correlation kernel for the eigenvalues
of $M_1$ has a scaling limit at the origin given by
\begin{align}
\int_0^1 G^{1,0}_{0,3} \left(   \begin{array}{c} - \\a, 0,-b
\end{array}\Big{|} ux
\right) G^{2,0}_{0,3} \left(\begin{array}{c} - \\  b, 0, -a
\end{array}
    \Big{|} uy \right) \ud u. \label{kernelsCauchy2MM}
\end{align}
This is slightly different from \eqref{eq:K m=2}, since we cannot
freely permute the parameters $\nu_1, \nu_2, 0$ in \eqref{eq:K m=2}.

However, from \eqref{eq:multiply} we see that
\begin{align*}
 G^{1,0}_{0,3} \left(  \begin{array}{c} - \\a, 0, - b
\end{array}\Big{|} ux
\right) & =
(ux)^a G^{1,0}_{0,3} \left(  \begin{array}{c} - \\ 0, -a,-b-a
\end{array}\Big{|} ux
\right), \\
G^{2,0}_{0,3} \left( \begin{array}{c} - \\ b,0,-a
\end{array}
    \Big{|} uy \right) & =
(uy)^{-a}G^{2,0}_{0,3} \left( \begin{array}{c} - \\  b+a, a, 0
\end{array}
    \Big{|} uy \right).
\end{align*}
Hence,
\begin{equation*}
\int_0^1 G^{1,0}_{0,3} \left(   \begin{array}{c} - \\a, 0,-b
\end{array}\Big{|} ux
\right) G^{2,0}_{0,3} \left(\begin{array}{c} - \\  b, 0, -a
\end{array}
    \Big{|} uy \right) \ud u
        =\left(\frac{x}{y}\right)^a K_{a+b,a}(x,y).
\end{equation*}

The prefactor $\left(\frac{x}{y}\right)^a$ is irrelevant in a kernel
for a determinantal point process as it does not change the determinants
that give the point correlations. Therefore we see that
the limiting kernels \eqref{kernelsCauchy2MM}
in the Cauchy two matrix models are the same kernels
as the limiting kernels for squared singular values of
products of two complex Ginibre matrices. This supports our
conjecture that the kernels \eqref{def:K(x,y;m)} have a universal
character and appear in a wider context.

\subsection{Integrable form of the limiting kernels}
An integral operator with kernel $K(x,y)$ is called integrable if
\[
K(x,y)=\frac{\sum_{i=1}^n f_i(x)g_i(y)}{x-y}, \qquad \text{with} \quad
\sum_{i=1}^nf_i(x)g_i(x)=0,
\]
for some $n \in \{2, 3, \ldots \}$, and certain  functions $f_i$ and $g_i$. Integral operators of this form
benefit from the fact that there is a Riemann-Hilbert setting for the study of
the associated resolvent kernels, determinants, etc.; see
\cite{IIKS}. The kernels of standard universality classes (sine,
Airy, Bessel) encountered in random matrix theory all belong to the class of
integrable operators. The representation \eqref{KnandRHP} of $K_n$ in terms of the
solution of a Riemann-Hilbert problem is also of the integrable form.

 We conclude this paper by giving the integrable
form of the limiting kernels derived in Theorem \ref{thm:local
limit}. Our argument follows \cite[Section 5]{BeGeSz}, where this
was shown for the case $M=2$.

\begin{proposition}\label{prop:integrable}
With $K_{\nu}^M(x,y)$ defined in \eqref{def:K(x,y;m)}, we have
\begin{equation}\label{eq:inte form}
K_\nu^M(x,y)=\frac{\mathcal {B}\left( G^{1,0}_{0,M+1} \left(
\begin{array}{c} - \\-\nu_0, -\nu_1, \ldots, -\nu_M
\end{array} \Big{|} x \right),G^{M,0}_{0,M+1} \left(\begin{array}{c} - \\  \nu_1, \ldots, \nu_M,\nu_0  \end{array}
    \Big{|}y \right)\right)}{x-y},
\end{equation}
where $\mathcal {B}(\cdot,\cdot)$ is a bilinear operator defined by
\begin{align}\label{def:bilinear}
\mathcal {B}\left(f(x),g(y)\right)=(-1)^{M+1}\sum_{j=0}^{M}(-1)^{j}
\left(\Delta_x\right)^jf(x)\left(\sum_{i=0}^{M-j}a_{i+j}\left(\Delta_y\right)^ig(y)
\right),
\end{align}
with $\Delta_x=x\frac{\ud}{\ud x}$ and $\Delta_y=y\frac{\ud}{\ud
y}$. The constants $a_i$ in \eqref{def:bilinear} are determined by
\begin{equation}\label{def:ai}
\prod_{i=1}^M(x-\nu_i)=\sum_{i=0}^M a_i x^i,
\end{equation}
that is,
\begin{equation}\label{eq:elementary}
a_i=(-1)^ie_{M-i}(\nu_1,\ldots,\nu_M)
\end{equation}
with $e_{i}(\nu_1,\ldots,\nu_M)$ being the elementary symmetric
polynomial.
\end{proposition}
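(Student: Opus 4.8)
The plan is to start from the double contour integral formula \eqref{def:K(x,y;m)} for $K_\nu^M(x,y)$ and manufacture the factor $\frac{1}{x-y}$ by reversing the Feynman-parameter trick \eqref{u-integral}. Recall that in the proof of Theorem \ref{thm:local limit} we wrote
\[
K_\nu^M(x,y) = -\int_0^1 F(ux)\, G(uy)\, \ud u,
\]
where $F(\cdot) = G^{1,0}_{0,M+1}\!\left({- \atop -\nu_0, \ldots, -\nu_M}\big|\,\cdot\,\right)$ and $G(\cdot) = G^{M,0}_{0,M+1}\!\left({- \atop \nu_1, \ldots, \nu_M, \nu_0}\big|\,\cdot\,\right)$. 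The idea is that $F$ and $G$ satisfy linear ODEs with $\Delta = x\frac{\ud}{\ud x}$, and that one can integrate $\int_0^1 F(ux) G(uy)\,\ud u$ by parts in $u$ repeatedly, each time trading a $\Delta$ acting on $F$ for a $\Delta$ acting on $G$ (up to boundary terms at $u=0,1$), until the integrand becomes a total $u$-derivative and the integral collapses to a boundary evaluation divided by $x-y$.

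First I would record the differential equations satisfied by the two Meijer G-functions. From the Mellin-Barnes representations one has that $F(x)$ is annihilated by $\prod_{j=0}^M(\Delta + \nu_j) - x$, or rather, using the parameter conventions here, $F$ satisfies $\left(\prod_{j=0}^M(\Delta+\nu_j)\right) F(x) = (-1)^{M+1} x F(x)$ (the precise sign and shift I would pin down from \eqref{def:Meijer}), and likewise for $G$. The key algebraic input is \eqref{def:ai}: writing $\prod_{i=1}^M(\Delta - \nu_i) = \sum_{i=0}^M a_i \Delta^i$, the operator $\prod_{j=1}^M(\Delta+\nu_j)$ that appears (after stripping the $\nu_0=0$ factor, i.e. one plain $\Delta$) is $\sum_{i=0}^M(-1)^{M-i} a_i \Delta^i$ evaluated with a sign flip $\Delta \mapsto -\Delta$; this is where the $a_i$ enter the bilinear form \eqref{def:bilinear}. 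Second, I would set up the integration-by-parts identity: using $u\frac{\ud}{\ud u} F(ux) = (\Delta_x F)(ux)$ and similarly in $y$, one shows that
\[
(x\tfrac{\ud}{\ud x} - y\tfrac{\ud}{\ud y})\,[\text{primitive}] \;\sim\; \text{integrand},
\]
so that, telescoping over $j=0,\ldots,M$ with alternating signs $(-1)^j$ and the inner sum $\sum_{i=0}^{M-j} a_{i+j}(\Delta_y)^i$, the combination $\mathcal{B}(F,G)$ defined in \eqref{def:bilinear} is exactly the boundary term at $u=1$ minus the boundary term at $u=0$, divided by $x-y$.

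Third, I would check the boundary contributions. At $u=1$ the boundary term should reproduce $\mathcal{B}\big(F(x),G(y)\big)$ up to the $\frac{1}{x-y}$ (this is essentially the definition of $\mathcal{B}$, so it is really a matter of bookkeeping). The genuine content is that the boundary term at $u=0$ vanishes: this needs the behavior of $F(ux)$ and $G(uy)$ and of their $\Delta$-derivatives as $u\to 0+$. From \eqref{def:Meijer} one reads off that $G^{1,0}_{0,M+1}$ vanishes to positive order at the origin (it behaves like a power $x^{-\nu_{\min}}$ times an entire function, up to logs when there are coincident parameters) while $G^{M,0}_{0,M+1}$ has at worst the mild logarithmic blow-up of \eqref{w0:at0}, so that any product of the form $(\Delta_x)^j F \cdot (\Delta_y)^i G$ evaluated at $u x, u y$ tends to $0$ as $u \to 0$ because $\nu_j > -1$ for all $j$. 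This is the step I expect to be the main obstacle — not because it is deep, but because one must handle the logarithmic cases (coincident $\nu_j$) and confirm that applying $\Delta$ repeatedly does not destroy the decay; I would argue it cleanly via the Mellin-Barnes integrals, where $(\Delta_x)^j$ just inserts a polynomial factor $(-s)^j$ and does not move the contour, so the small-$x$ order of vanishing is unchanged.

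Finally, I would assemble the pieces: the integration by parts gives $-\int_0^1 F(ux)G(uy)\,\ud u = \frac{1}{x-y}\,\mathcal{B}\big(F(x),G(y)\big)$ with the constant $(-1)^{M+1}$ coming out of the sign in the ODE for $F$ (the factor $(-1)^{M+1}x F(x) = \prod(\Delta+\nu_j)F$), and comparing with the representation of $K_\nu^M$ above yields \eqref{eq:inte form}. As a consistency check I would specialize to $M=2$ and verify that \eqref{def:bilinear} reduces to the bilinear concomitant of Bertola, Gekhtman and Szmigielski in \cite[Section 5]{BeGeSz}, and to $M=1$ to recover the classical Bessel kernel concomitant, which also fixes any lingering sign ambiguity.
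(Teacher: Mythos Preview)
Your approach is essentially the same as the paper's: start from the single-integral representation $K_\nu^M(x,y)=\int_0^1 f(tx)g(ty)\,\mathrm{d}t$ with $f=G^{1,0}_{0,M+1}$ and $g=G^{M,0}_{0,M+1}$, use the Meijer-$G$ ODEs for $f$ and $g$ to recognize $(x-y)f(tx)g(ty)$ as $\partial_t\,\mathcal B(f(tx),g(ty))$, integrate in $t$, and kill the boundary contribution at $t=0$. (Note, incidentally, that the sign in your starting formula is off: the paper's second identity in \eqref{def:K(x,y;m)} has $+\int_0^1$, not $-\int_0^1$.)

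There is one concrete error in your boundary analysis. You assert that $F=G^{1,0}_{0,M+1}\!\left({-\atop -\nu_0,\ldots,-\nu_M}\big|\,\cdot\,\right)$ ``vanishes to positive order at the origin (it behaves like a power $x^{-\nu_{\min}}$ \ldots)''. This is not true: $F$ is the entire hypergeometric function $\frac{1}{\prod_{j=1}^M\Gamma(1-\nu_j)}\,{}_0F_M\!\left({-\atop 1-\nu_1,\ldots,1-\nu_M}\big|-x\right)$, so $F(0)$ is a nonzero constant. Consequently the $j=0$ term in \eqref{def:bilinear}, namely $f(tx)\sum_{i=0}^M a_i(\Delta_y)^i g(ty)$, does not obviously vanish as $t\to 0$ by your reasoning. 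The paper resolves this by observing that the inner sum is exactly $\prod_{i=1}^M(\Delta_y-\nu_i)g(ty)$, and combining with the remaining $\Delta_y$ factor (coming from $\nu_0=0$) and the ODE \eqref{eq:fDg} gives $-ty\,g(ty)$, which supplies the missing decay. For $j\ge 1$ the vanishing follows from $(\Delta_x)^j f(x)=O(x)$ (since $\Delta_x$ kills the constant term of the analytic function $f$), together with the mild blow-up \eqref{eq:asy g} of $g$. So the mechanism is a case split on $j$, not a uniform decay of $F$.
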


The bilinear operator $\mathcal B$ is called a point-split bilinear concomitant in \cite{BeGeSz}.

\begin{proof} We set
\begin{align}\label{def:f}
f(x)&= G^{1,0}_{0,M+1} \left(
\begin{array}{c} - \\-\nu_0, -\nu_1, \ldots, -\nu_M
\end{array}\Big{|} x  \right),
\\
g(y)&=G^{M,0}_{0,M+1} \left(\begin{array}{c} - \\  \nu_1, \ldots,
\nu_M,\nu_0  \end{array}
    \Big{|} y \right). \label{def:g}
\end{align}
By \eqref{def:K(x,y;m)}, our aim is then to evaluate the integral
\begin{equation} \label{eq:Knuintegral}
K_\nu^M(x,y) = \int_0^1 f(tx)g(ty)\ud t.
\end{equation}

Note that the Meijer-G function satisfies the differential equation \eqref{eq:diff}.
For $f$ and $g$ given by \eqref{def:f} and \eqref{def:g}, this implies that for every $t$,
\begin{align}
g(ty) \prod_{j=0}^M(\Delta_x+\nu_j)f(tx)&=-tx f(tx)g(ty), \label{eq:gDf}
\\
f(tx) \prod_{j=0}^M(\Delta_y-\nu_j)g(ty)&=(-1)^M ty f(tx) g(ty). \label{eq:fDg}
\end{align}
If $M$ is odd we subtract these two identities, while if $M$ is even
we add them together. Since the arguments in both cases are similar,
we restrict to the case where $M$ is odd.

Subtracting \eqref{eq:gDf} from \eqref{eq:fDg} we obtain
\begin{align}
&(x-y)f(tx)g(ty) \nonumber
\\
&=\frac{1}{t}\left(f(tx)\prod_{j=0}^M(\Delta_y-\nu_j)g(ty)-g(ty)\prod_{j=0}^M(\Delta_x+\nu_j)f(tx)\right)
\nonumber
\\
&=\frac{1}{t}\sum_{i=0}^M
a_i\left(f(tx)(\Delta_y)^{i+1}g(ty)+(-1)^ig(ty)(\Delta_x)^{i+1}f(tx)\right),
\label{eq:difference}
\end{align}
where the constants $a_i$ are defined in \eqref{def:ai} and
\eqref{eq:elementary}. We next observe that
\begin{multline*}
\frac{\partial}{\partial t}\left(\sum_{j=0}^{i}(-1)^{j}
\left(\Delta_x\right)^jf(tx)\left(\Delta_y\right)^{i-j}g(ty)\right) \\
    = \frac{1}{t} \left( f(tx)(\Delta_y)^{i+1}g(ty) + (-1)^i
    g(ty)(\Delta_x)^{i+1}f(tx)\right),
\end{multline*}
which by \eqref{def:bilinear} and \eqref{eq:difference} implies that
\begin{align} \label{eq:difference2}
    (x-y) f(tx) g(ty) =
        \frac{\partial}{\partial t} \mathcal B(f(tx), g(ty)).
        \end{align}
Using \eqref{eq:difference2} in \eqref{eq:Knuintegral} we find
\[ (x-y) K_{\nu}^M(x,y) = \mathcal B(f(x), g(y)) - \lim_{t \to 0+} \mathcal B(f(tx), g(ty)). \]
It thus remains to show that
\begin{equation}  \label{show:limit}
    \lim_{t \to 0+} \mathcal B(f(tx), g(ty)) = 0,
    \end{equation}
and to do this we need to understand the behavior of $f$ and $g$ at the origin.

First of all, we have by \cite[formula 16.18.1]{DLMF}) and \eqref{def:f} that $f$ is a hypergeometric
function
\begin{align} \label{eq:asy f}
f(x) = \frac{1}{\prod_{j=1}^M \Gamma(1-\nu_j)}
\mathop{{{}_{{0}}F_{{M}}}\/}\nolimits\!\left({-\atop 1-\nu_1,
\dots,1-\nu_M}  \Big{|} -x\right),
\end{align}
so that $f$ is analytic at the origin. Next by \eqref{def:g}, the
definition of \eqref{def:Meijer}, and the properties of the Mellin
transform (see e.g.\ \cite{FlGoDu}), we find
\[ \int_0^{\infty} (\Delta_y)^i  g(y) y^{s-1} \ud y = (-s)^i \frac{ \prod_{j=1}^M \Gamma(s+\nu_j)}{\Gamma(1-s)}. \]
Then it follows in the same way as we obtained \eqref{w0:at0} that
\begin{align}\label{eq:asy g}
 (\Delta y)^i g(y)= O(y^{\alpha} (\log y)^{r-1}) \qquad \text{ as } y \to 0+,
\end{align}
with $\alpha = \min(\nu_1, \ldots, \nu_M) > -1$ and $r = \# \{ j \mid \nu_j = \alpha\}$.

Now we look at the $j=0$ term in \eqref{def:bilinear} which is
\begin{align*}
    f(x) \sum_{i=0}^M a_i (\Delta_y)^i g(y) & = f(x) \prod_{i=0}^M (\Delta_y - \nu_i) g(y)
    = -f(x) y g(y),
    \end{align*}
where in the last step we used \eqref{eq:fDg} with $t=1$.
Replacing $x \mapsto tx$, $y \mapsto ty$, we find by \eqref{eq:asy f} and  \eqref{eq:asy g}
that the limit is $0$ as $t \to 0+$.
For $j \geq 1$ we have
\[ (\Delta_x)^j f(x) = O(x) \qquad \text{ as } x \to 0, \]
and then it follows from \eqref{eq:asy g} that the terms in
\eqref{def:bilinear} with $j \geq 1$ are all $O(x) O(y^{\alpha}
(\log y)^{r-1)}$ as $x, y \to 0+$. Replacing $x \mapsto tx$, $y
\mapsto ty$, we then find that these terms tend to $0$ as well as $t
\to 0+$. This proves \eqref{show:limit} and it completes the proof
of Proposition \ref{prop:integrable}.
\end{proof}

\appendix

\section{The Meijer G-function}
We give a brief introduction to the Meijer G-function in this
appendix. By definition, the Meijer G-function is given by the
following contour integral in the complex plane:
\begin{multline}\label{def:Meijer}
G^{m,n}_{p,q}\left({a_1,\ldots,a_p \atop b_1,\ldots,b_q}\Big{|}
z\right)=G^{m,n}_{p,q}\left({\bf{a_p} \atop
\bf{b_q}}\Big{|}z\right)\\
=\frac{1}{2\pi i}\int_\gamma
\frac{\prod_{j=1}^m\Gamma(b_j+u)\prod_{j=1}^n\Gamma(1-a_j-u)}
{\prod_{j=m+1}^q\Gamma(1-b_j-u)\prod_{j=n+1}^p\Gamma(a_j+u)}z^{-u}
\ud u,
\end{multline}
where $\Gamma$ denotes the usual gamma function and the branch cut
of $z^{-u}$ is taken along the negative real axis. It is also
assumed that
\begin{itemize}
  \item $0\leq m\leq q$ and $0\leq n \leq p$, where $m,n,p$ and $q$
  are integer numbers;
  \item The real or complex parameters $a_1,\ldots,a_p$ and
  $b_1,\ldots,b_q$ satisfy the conditions
  \begin{equation*}
  a_k-b_j \neq 1,2,3, \ldots, \quad \textrm{for $k=1,2,\ldots,n$ and $j=1,2,\ldots,m$,}
  \end{equation*}
  i.e., none of the poles of $\Gamma(b_j+u)$, $j=1,2,\ldots,m$ coincides
  with any poles of $\Gamma(1-a_k-u)$, $k=1,2,\ldots,n$.
\end{itemize}
The contour $\gamma$ is chosen in such a way that all the poles of
$\Gamma(b_j+u)$, $j=1,\ldots,m$ are on the left of the path, while
all the poles of $\Gamma(1-a_k-u)$, $k=1,\ldots,n$ are on the right,
which is usually taken to go from $-i\infty$ to $i\infty$. In
particular, it can be a loop starting and ending at $+\infty$ if
$p>q$, or a loop beginning and ending at $-\infty$ if $p<q$. Most of
the known special functions can be viewed as special cases of the
Meijer G-functions, we refer to \cite{Luke,DLMF} for more details.
We end this appendix with several formulas used in this paper.
\begin{itemize}
\item From the definition \eqref{def:Meijer}, it is easily seen that
   \begin{equation}\label{eq:multiply}
   z^{\rho}G^{m,n}_{p,q}\left({\bf{a_p}\atop
        \bf{b_q}}\Big{|}z\right)=G^{m,n}_{p,q}\left({\bf{a_p}+\rho \atop
        \bf{b_q}+\rho}\Big{|}z\right).
   \end{equation}

  \item The Meijer G-function $G^{m,n}_{p,q}\left({\bf{a_p} \atop
\bf{b_q}}\Big{|}z\right)$ satisfies the following linear
differential equation of order $\max(p,q)$:
\begin{multline}\label{eq:diff}
\bigg[(-1)^{p-m-n}z\prod_{j=1}^{p}\left(z\frac{\ud}{\ud
z}-a_j+1\right)\\
-\prod_{j=1}^{q}\left(z\frac{\ud}{\ud
z}-b_j\right)\bigg]G^{m,n}_{p,q}\left({\bf{a_p} \atop
\bf{b_q}}\Big{|}z\right)=0;
\end{multline}
see \cite[formula 16.21.1]{DLMF}.
\end{itemize}

\section*{Acknowledgements}
We thank  Gernot Akemann, Jesper R.~Ipsen and Mario Kieburg for interesting
discussions and for providing us with an early copy of  the preprint~\cite{AIK}.

The first author is supported by KU Leuven Research Grant  OT/12/073, the Belgian
Interuniversity Attraction Pole P07/18, FWO Flanders projects
G.0641.11 and G.0934.13, and by Grant No.~MTM2011-28952-C02 of the
Spanish Ministry of Science and Innovation.
The second author was a Postdoctoral Fellow of the Fund for Scientific Research - Flanders (Belgium), 
and is also supported by The Program for Professor of Special Appointment (Eastern Scholar) 
at Shanghai Institutions of Higher Learning (No. SHH1411007) and by Grant SGST 12DZ 2272800 from Fudan University.

\end{document}